\documentclass{article}
\usepackage[margin=1.3in]{geometry}
\usepackage{authblk}
\usepackage[T1]{fontenc}
\usepackage{graphicx}
\usepackage[style=numeric, backend=biber]{biblatex}
\usepackage{hyperref}
\usepackage{amsmath}
\usepackage{amsfonts}
\usepackage{amsthm}
\usepackage{amssymb}
\usepackage{graphicx}
\usepackage{float}
\restylefloat{table}
\usepackage{siunitx}
\usepackage{bbm}
\usepackage{bm}
\usepackage{comment}
\usepackage{xcolor}
\usepackage{algorithm}
\usepackage{algpseudocode}
\usepackage{dsfont}
\usepackage[capitalise,nameinlink,noabbrev]{cleveref}
\usepackage{braket}
\usepackage{enumerate}
\usepackage{siunitx}
\usepackage{booktabs}
\usepackage{soul}
\usepackage{amsmath,amsbsy,amsfonts,amssymb,amsthm,color,dsfont,mleftright, bbm}
\usepackage{algorithm, algpseudocode}
\usepackage{enumerate}
\usepackage{hyperref}
\usepackage{cleveref}
\hypersetup{
	colorlinks=true,
	linkcolor=blue,
	filecolor=blue,
	citecolor = red,      
	urlcolor=cyan,
}
\numberwithin{equation}{section}

\def\ddefloop#1{\ifx\ddefloop#1\else\ddef{#1}\expandafter\ddefloop\fi}

\def\ddef#1{\expandafter\def\csname b#1\endcsname{\ensuremath{\mathbb{#1}}}}
\ddefloop ABCDEFGHIJKLMNOPQRSTUVWXYZ\ddefloop

\def\ddef#1{\expandafter\def\csname c#1\endcsname{\ensuremath{\mathcal{#1}}}}
\ddefloop ABCDEFGHIJKLMNOPQRSTUVWXYZ\ddefloop

\def\ddef#1{\expandafter\def\csname t#1\endcsname{\ensuremath{\tilde{#1}}}}
\ddefloop ABCDEFGHIJKLMNOPQRSTUVWXYZabcdefghijklmnopqrstuvwxyz\ddefloop

\def\ddef#1{\expandafter\def\csname v#1\endcsname{\ensuremath{#1}}}
\ddefloop ABCDEFGHIJKLMNOPQRSTUVWXYZabcdefghijklmnopqrstuvwxyz\ddefloop

\def\ddef#1{\expandafter\def\csname v#1\endcsname{\ensuremath{{\csname #1\endcsname}}}}
\ddefloop {alpha}{beta}{gamma}{delta}{epsilon}{varepsilon}{zeta}{eta}{theta}{vartheta}{iota}{kappa}{lambda}{mu}{nu}{xi}{pi}{varpi}{rho}{varrho}{sigma}{varsigma}{tau}{upsilon}{phi}{varphi}{chi}{psi}{omega}{Gamma}{Delta}{Theta}{Lambda}{Xi}{Pi}{Sigma}{varSigma}{Upsilon}{Phi}{Psi}{Omega}{ell}\ddefloop

\def\ddef#1{\expandafter\def\csname bb#1\endcsname{\ensuremath{\bm{#1}}}}
\newtheorem{theorem}{Theorem}

\newtheorem{corollary}{Corollary}

\newcommand{\poly}{\mathrm{poly}}

\usepackage [ lambda,advantage , operators , sets , adversary , landau , probability , notions , logic ,ff, mm,primitives , events , complexity , asymptotics , keys]{cryptocode}
\usepackage[dvipsnames]{xcolor}
\hypersetup{
	colorlinks=true,
	urlcolor=MidnightBlue,
	linkcolor=MidnightBlue,
	citecolor=MidnightBlue,
	unicode
}

\DeclareFontFamily{OT1}{mathc}{}
\DeclareFontShape{OT1}{mathc}{m}{it}{<-> mathc10}{}
\DeclareMathAlphabet{\mathabxcal}{OT1}{mathc}{m}{it}

\renewcommand{\poly}{\mathrm{poly}}

\newtheorem{proper}{Property}

\usepackage{color}

\begin{document}
	\title{\huge Hybrid Consensus with Quantum Sybil Resistance}
\author[1]{\large Dar Gilboa\thanks{darg@google.com}}
\author[1,2]{\large Siddhartha Jain\thanks{sidjain@utexas.edu}}
\author[3]{\large Or Sattath}
\affil[1]{\small Google Quantum AI, Venice, CA, United States}
\affil[2]{\small University of Texas, Austin, TX, United States}
\affil[3]{\small Ben-Gurion University, Beer Sheva, Israel}
	
	\maketitle
	
	\begin{abstract}
		Sybil resistance is a key requirement of decentralized consensus protocols. It is achieved by introducing a scarce resource (such as computational power, monetary stake, disk space, etc.), which prevents participants from costlessly creating multiple fake identities and hijacking the protocol. Quantum states are generically uncloneable, which suggests that they may serve naturally as an unconditionally scarce resource. In particular, uncloneability underlies quantum position-based cryptography, which is unachievable classically. We design a consensus protocol that combines classical hybrid consensus protocols with quantum position verification as the Sybil resistance mechanism, providing security in the standard model, and achieving improved energy efficiency compared to hybrid protocols based on Proof-of-Work. Our protocol inherits the benefits of other hybrid protocols, namely the faster confirmation times compared to pure Proof-of-Work protocols, and resilience against the compounding wealth issue that plagues protocols based on Proof-of-Stake Sybil resistance. We additionally propose a spam prevention mechanism for our protocol in the Random Oracle model.
	\end{abstract}
	
	\section{Introduction}
	
	The decentralized consensus problem has received considerable attention since the landmark Bitcoin protocol first demonstrated a viable solution in the fully permissionless setting~\cite{Nakamoto2008-bc}. Subsequent years have witnessed an explosion of protocols of various flavors\cite{Buterin2014-nr, Pass2016-ss, Kiayias2017-is, Yakovenko2018-ot, Cohen2019-pn}, providing a base layer upon which various applications can be built~\cite{Roughgarden2024-wn, Buterin2022-yl}. While some of the original formulations of the consensus problem, often under the name of State Machine Replication (SMR), were in a setting where the set of participants is fixed in advance and known to all~\cite{Dolev1983-yg}, protocols designed to function in the decentralized setting of the web cannot make this assumption. The ideal degree of decentralization is the ``fully permissionless'' setting, where the set of participants is unknown to the protocol and may vary over time. Intermediate settings settings between the permissioned and fully permissionless setting are also common, most notably the quasi-permissionless setting where the protocol has some knowledge of the participants, though the set of active participants can vary over time (see \cite{Lewis-Pye2023-zp} for a precise definition of these models, including intermediate ones). 
	
	Any permissionless protocol must include a mechanism of Sybil %\dar{Who capitalized this? Should it be capitalized?}\onote{Me. And yes, the origin for the name is cool: "It is named after the subject of the book Sybil, a case study of a woman diagnosed with dissociative identity disorder." All texts I've checked use capital letter for Sybil attack, e.g. \url{https://nymity.ch/sybilhunting/pdf/Levine2006a.pdf}.}\dar{The origin of the name of the character though is that of prophetesses of the ancient world, who spoke in voices that weren't their own (such as the voice of god). In some places (say Roughgarden's notes) it's not capitalized.} 
	resistance. Since the identity of participants is not fixed, this mechanism prevents a Sybil attack in which an adversary creates an arbitrary number of fake identities costlessly and hijacks the protocol. Sybil resistance is typically achieved by introducing a costly economic resource, and the protocol is designed such that the power of any participant in the protocol, honest or otherwise, is roughly proportional to the amount of this resource they hold. 
	Some of the most common choices in practice are computational power/hashrate (Proof-of-Work or PoW)\cite{Nakamoto2008-bc} and stake in the cryptocurrency native to the blockchain (Proof-of-Stake or PoS)\cite{Kiayias2017-is}, though many other choices are possible. Each leads to various tradeoffs in the protocol design. %We review the consensus problem in more detail in \Cref{sec:consensus}. 
	
	In parallel, ever since the advent of quantum money in the 1960s, it has been known that the uncloneability of quantum states can have interesting cryptographic consequences. This has led to the construction of multiple useful quantum primitives such as quantum money and tokens~\cite{Wiesner1983-lt, Gavinsky2011-qd, Ben-David2023-in, Gavinsky2025-kb}, quantum key distribution~\cite{Bennett1984-xn} copy-protected quantum software~\cite{Aaronson2009-uz} and one-shot signatures~\cite{Amos2020-ls, Shmueli2025-vi, Shmueli2025-oh} (see \cite{Sattath2022-my} for a review). 
    These can be used to create objects with tantalizing properties that are impossible to achieve classically without hardware assumptions, such as currency or software that cannot be copied or signatures that cannot be reused. Given the unique capabilities of quantum states to serve as an uncloneable resource, it is natural to ask whether better solutions to the decentralized consensus problem, and the Sybil resistance component in particular, exist in a quantum world. 

    \subsection{Quantum consensus}
	
	Indeed, there is at least one result that answers the above question in the affirmative, by demonstrating a consensus protocol with constant round complexity in a setting where classical lower bounds are polynomial in the number of participants~\cite{Ben-Or2005-ag}. This lower bound however assumes an unrealistically powerful adversary model, since in practice most protocols have constant round complexity under normal operating conditions. Aside from this work, there seem to have been very few attempts to use quantum resources to solve consensus, despite the massive interest in the problem.
   % Classically, cryptocurrencies must reach consensus to answer a basic question: In the quantum setting, the no-cloning theorem enables alternative approaches. 
   %For example, 
   Zhandry gives a construction of a cryptocurrency based on quantum lightning\cite{Zhandry2019-ru} (which can be built from one-shot signatures\cite{Shmueli2025-vi}). His scheme uses PoW to mint new money, as in many other cryptocurrencies, but relies on the no-cloning property of quantum lightning to prevent double-spending. Overall, Zhandry’s approach eliminates the need for a blockchain or any form of consensus. The main drawback of Zhandry’s approach is that, without consensus, there is no way to regulate the rate at which money is minted and thus control inflation. This construction is closest in spirit to our work, yet it uses uncloneability to partially replace the consensus mechanism itself rather than the Sybil resistance mechanism (i.e., PoW). Put differently, this doesn't enable one to solve the basic question that consensus answers when applied to cryptocurrencies: “Where is the money?”. 
    Ref.~\cite{CS20} piggybacks on an existing blockchain for Sybil resistance (where the coins are already allocated), and provides a mechanism to migrate to quantum money, thereby removing the need to further maintain the blockchain, while relying on no-cloning to prevent double-spends. 
    
    In this work we attempt a fresh look at this question, and design a consensus protocol based on a novel quantum Sybil resistance mechanism. Our key idea is to use quantum computers located at distinct positions in space as the scarce resource, enabling Sybil resistance. We denote this mechanism \textit{Quantum Proof-of-Position} (QPoP). 
    %exhibiting advantages compared to leading classical protocols\footnote{}.
	% In stark contrast to any protocol based on Proof-of-Work, in which mining must be performed constantly as part of the protocol, our protocol only uses a quantum computer once to confirm eligibility, and once to participate in the protocol. Our protocol also has improved security guarantees compared to Proof-of-Work, and avoids censorship issues that plague alternatives such as Proof-of-Stake.
	This leads to improved energy efficiency compared to Proof-of-Work and mechanisms to avoid concentration of wealth that occurs in Proof-of-Stake protocols.
	A more detailed comparison to existing protocols is provided in \Cref{sec:comparison}. 
	
	While there are problems that are widely believed to only be efficiently solvable on a quantum computer, most notably factoring, it is not obvious how to use this fact to obtain improved Sybil resistance. This is because relying on computational tasks directly will likely require fine-grained analysis of the complexity of e.g. factoring in order to guarantee security, similar to the fine-grained complexity guarantees needed for classical Proof-of-Work~\cite{Ball2024-qd}. The practical cost of factoring may be particularly sensitive to hardware improvements in a field that is progressing rapidly, and to tradeoffs between circuit depth and qubit counts which are common in quantum algorithm design (e.g., Regev's recent factoring algorithm~\cite{Regev2025-ho}). Even if the complexity of factoring or another problems exhibiting quantum advantage could be understood at the required resolution, it is still unclear what benefits would be conferred by replacing classical computation with quantum. In our approach, there is a clear benefit in that the main function of the quantum computers is to (periodically) provide a proof of the quantum computer's position in space. This is known to be impossible classically~\cite{Chandran2009-ga}, but achievable using quantum computers~\cite{Buhrman2014-qx}, even when only using classical communication~\cite{Liu2022-ci}. 
	We discuss the problem of position verification and its quantum solution in greater detail in \Cref{sec:pv}. 
	
	One issue with using position verification for Sybil resistance is that the proof of position is not transferable, which precludes it from being used as a drop-in replacement for PoW in e.g. Nakamoto consensus. Fortunately, there is a class of hybrid protocols such as Hybrid~\cite{Pass2016-ss}, Byzcoin~\cite{Kokoris-Kogias2016-gm} and Solida~\cite{Abraham2016-ii} which maintain a dynamic committee running a permissioned consensus protocol, and update it in a fair way using PoW. 
	%These seem particularly well-suited to using our off-chain resource, since only the committee needs to run position verification as the verifier.
	%Hybrid protocols use both on and off-chain resources. As such, they are able to avoid the issue of stake being taken over by Byzantine nodes known as  which is the source of the impossibility result in Theorem 11.1 of \cite{Lewis-Pye2023-zp}. At the same time, these protocols enjoy faster finalization compared to Nakamoto consensus. However, they still suffer from the same drawback as any protocol based on PoW (namely energy consumption, reduction in safety due to stochasticity+message delays, security only in ROM, etc.). 
	The committee in these algorithms makes them a particularly good match for our resource, since only the committee members need to run position verification as the verifier. We show that QPoP can be substituted for Proof-of-Work in these protocols, focusing on Solida~\cite{Abraham2016-ii}. We do this because Byzcoin does not come with formal guarantees (at least not at the time of original publication) and Solida is a simplification of Hybrid, though similar modifications of these other hybrid protocols might be possible. %This leads to improved energy efficiency and security. 
	
	Another consequence of the non-transferability of position verification is that spam prevention in our setting is non-trivial. Since position verification requires communication and computation by the entire committee, one would like to avoid a scenario where an adversary costlessly publishes a large number of positions that must then be verified. The use of quantum computers in our protocol suggests a natural solution. We present in \Cref{alg:pq_reg_rom} a registration algorithm that optionally uses proofs of quantumness based on discrete logarithm to ensure that registration to participate in the protocol is not costless, by relying on a Random Oracle assumption. The proof is publicly verifiable, using classical computation only. This can be seen as a highly coarse-grained, effectively binary form of Sybil resistance (distinguishing between possession of a sufficiently powerful quantum computer or lack thereof), which as mentioned above would appear insufficient on its own to build a useful Sybil resistance mechanism. We show however that it can in some sense be ``bootstrapped'' into a legitimate Sybil resistance mechanism based on position verification.

    The development of quantum computation will very likely impact classical consensus protocols due to the vulnerabilities of cryptographic schemes based on Elliptic Curve Cryptography, used by virtually every widely deployed blockchain, to quantum attacks. Due to their decentralized nature, blockchains are especially ill-suited to coping with such a threat, and transitioning to post-quantum cryptography is nontrivial. We do not concern ourselves with this problem in this work.

	\subsection{Comparison to other protocols} \label{sec:comparison}
	
	Below we summarize how our protocol compares to other approaches to Sybil resistance. For more details, see \Cref{sec:notable_properties}. 
	
	\begin{itemize}
		\item Proof-of-Work (Nakamoto consensus\cite{Nakamoto2008-bc}, Solida \cite{Abraham2016-ii}, Hybrid\cite{Pass2016-ss}, Byzcoin\cite{Kokoris-Kogias2016-gm}, etc.): 
		\begin{itemize}
			\item PoW-based protocols, and Bitcoin in particular, are notorious for their energy consumption~\cite{Zhang2020-jx, Sedlmeir2020-qi}. This is because participation essentially amounts to repeatedly attempting to solve a cryptographic puzzle by brute-force search. In contrast, the quantum computers used by our protocol are only used very sparingly in order to validate the position of a participant. This has the potential to significantly reduce energy consumption.
			\item Since finalization of transactions is based on a deterministic BPFT protocol, it shares the improved throughput advantage that other hybrid protocols have compared to pure PoW-based solutions, and enables deterministic finalization (no rollbacks).
			\item Security of many PoW-based protocols (and in fact of other protocols based on longest-chain consensus\cite{Dembo2020-vd} as well) depends on message propagation delays, while security of our protocol is independent of these. Additionally, our Sybil resistance mechanism is secure in the Standard Model, unlike PoW which is only secure in the Random Oracle model.  
		\end{itemize}

		\item Proof-of-Stake (Ethereum\cite{Buterin2014-nr}, Ouroboros\cite{Kiayias2017-is}, etc.): 
		\begin{itemize}
			\item PoS-based protocols suffer from wealth compounding issues~\cite{Fanti2019-nj}. As the protocol evolves, a Byzantine leader can give preferential treatment to transactions from other Byzantine nodes and process them faster than transactions from honest nodes. This form of censorship may eventually lead to concentration of wealth, or a ``rich get richer'' effect, compromising the safety of the protocol over time. In this context, Theorem 11.1 of \cite{Lewis-Pye2023-zp} formalizes the limitations of protocols that only rely on ``on-chain'' resources like stake. These vulnerabilities are avoided by including an off-chain resource in hybrid protocols and using it in a way that on-chain resource holders have little control over. In our case, this is the process of election to the committee based on position sampling.
			\item PoS-based protocols are vulnerable to costless simulation~\cite{Daian2019-ky}. Our spam-mitigation technique based on Proof-of-Discrete-Logarithm prevents a \textit{classical} adversary from costlessly simulating the output of the protocol.
		\end{itemize}
		
		\item External resources other than work such as disk space or elapsed time\cite{Dziembowski2015-cu, Bowman2021-mr} (Chia\cite{Cohen2019-pn}, Solana\cite{Yakovenko2018-ot}, etc.): These typically rely on non-standard primitives such as Verifiable Delay Functions (VDFs)~\cite{Boneh2018-jq}. VDFs cannot be built in the Random Oracle model and hence require stronger assumptions~\cite{ZRW25}. Solana is perhaps the most widely deployed hybrid consensus protocol of this form. The details of its Proof-of-History Sybil resistance mechanism are not fully explicit and consequently it does not have security guarantees. There have also been concerns about the security of the proposed mechanism~\cite{Shoup2022-vz, Sliwinski2024-ie}. 
	\end{itemize}
	
	\begin{table}[t]
		\label{tab:comparison}
		\begin{center}
			\begin{tabular}{l l l l l l}
				\toprule[.5mm]
				\textbf{Protocol} & \textbf{\vtop{\hbox{\strut Energy}\hbox{\strut consumption}}} & \textbf{Finalization} & \textbf{Security} & \textbf{\vtop{\hbox{\strut Wealth}\hbox{\strut conc.}}}  & \textbf{\vtop{\hbox{\strut Costless}\hbox{\strut simulation}}} \\  
				\midrule
				Bitcoin (PoW)& {\bf \color{BrickRed}High} & {\bf \color{BrickRed}Slow}  & {\bf \color{BurntOrange}ROM}  & {\bf \color{ForestGreen}No}  & {\bf \color{ForestGreen}No} \\ 
				Ethereum (PoS)& {\bf \color{ForestGreen}Low} & {\bf \color{ForestGreen}Fast}  & {\bf \color{ForestGreen}SM}  & {\bf \color{BrickRed}Yes}  & {\bf \color{BrickRed}Yes} \\
				Solida (PoW)& {\bf \color{BrickRed}High} & {\bf \color{ForestGreen}Fast}   & {\bf \color{BurntOrange}ROM}  & {\bf \color{ForestGreen}No}  & {\bf \color{ForestGreen}No} \\ 
				Solana (PoH)& {\bf \color{ForestGreen}Low} & {\bf \color{ForestGreen}Fast}  & {\bf \color{BrickRed} N/A} & {\bf \color{ForestGreen}No}  &{\bf \color{ForestGreen}No\textsuperscript{*}} \\
				\textbf{This work (QPoP)} & {\bf \color{ForestGreen}Low} & {\bf \color{ForestGreen}Fast}  & {\bf \color{ForestGreen}SM}  & {\bf \color{ForestGreen}No}  & {\bf \color{BrickRed}Yes} \\
				\textbf{\vtop{\hbox{\strut This work (QPoP)}\hbox{\strut (spam prevention)}}} & {\bf \color{ForestGreen}Low} & {\bf \color{ForestGreen}Fast}  & {\bf \color{BurntOrange}ROM}  & {\bf \color{ForestGreen}No}  & {\bf \color{ForestGreen}No} \\
				\bottomrule[.5mm]
			\end{tabular}
			\caption{A comparison of our protocol with prominent blockchains. Our hybrid protocol uses quantum position verification as an off-chain resource, leading to improved energy efficiency compared to PoW protocols while avoiding the risks of wealth concentration and costless simulation faces by protocols that rely solely on on-chain resources such as stake.\\
				\footnotesize{\textsuperscript{*}Proof-of-History based hashchains do not have security guarantees. See additional discussion in \Cref{sec:notable_properties}.}}
		\end{center}
	\end{table}

         The details of the protocol are provided in \Cref{sec:details}, and in \Cref{sec:consistency_liveness} we show that it satisfies consistency, liveness and avoids spam. We survey results on the quantum resources required to run certain subroutines used by the protocol in \Cref{sec:resource_requirements}, and discuss various open questions in \Cref{sec:discussion}.
	
	\section{Preliminaries}
	% \subsection{Consensus} \label{sec:consensus}
	% \subsubsection{Hybrid protocols}
	\subsection{Position verification} \label{sec:pv}
	
	Position-based cryptography~\cite{Chandran2009-ga} is motivated by the idea that identity in a cryptographic setting can be established based on physical location. The original work that formalized this concept showed that secure position verification is impossible (when only classical resources are used). A position verification protocol involves a prover and a set of verifiers, and it was shown in~\cite{Chandran2009-ga} that for any classical protocol, the part of the prover can be simulated by an adversary that is not present at the claimed position. Interestingly, it was later shown that this is no longer the case if provers and verifiers are equipped with quantum computers~\cite{Buhrman2014-qx, Unruh2014-eb, Bluhm2021-rr}. At a high level, the classical impossibility results can be evaded in a quantum world because the attack in~\cite{Chandran2009-ga} requires that the adversary copies messages sent by the verifiers to the prover. If the messages are quantum states that are unknown to the adversary, they can no longer be copied by the no-cloning theorem. While the original proposals for quantum position verification required quantum messages, this requirement was later removed by combining position verification protocols with proofs of quantumness~\cite{Liu2022-ci}. 
	
	Proofs of quantumness~\cite{Mahadev2022-ts, Brakerski2021-bg, Brakerski2023-ok, Vidick2019-qm} are powerful tools that allow a classical verifier to certify that a prover holds a quantum computer, and can in fact be used to certify any efficient quantum computation. The original constructions of proof of quantumness rely on families of Noisy Trapdoor Claw-Free Functions (NTCFs), which map pairs of inputs to a single output, and can only be inverted efficiently given a secret trapdoor. A quantum computer can be used to prepare a superposition over the inputs of such a function that map to any output, and solve a challenge that requires such a state as input. A classical algorithm capable of solving this challenge could be rewound and used to recover both inputs of the NTCF that map to the same output, breaking the hardness assumption associated with the NTCF. One natural construction is based on hardness of Learning With Errors (LWE)\cite{Regev2010-zq}. The idea of \cite{Liu2022-ci} is to have a \emph{classical verifier} by replacing the quantum messages in position verification with a (classical) Proof-of-Quantumness. There are certain timing constraints that must be satisfied so that the prover cannot cheat, but the authors show that this enables secure position verification using only classical messages. As with other position verification schemes, the adversaries can spoof their position if they pre-share entanglement, yet the authors also show how to render position verification secure in such settings. We summarize their main results in \Cref{tab:cvpv_results}. 
	\begin{table}[ht]
		\centering
		\caption{Security of Classically Verifiable Position Verification (CVPV) \cite{Liu2022-ci}. Here polynomial refers to $n^c$ for all $c$, whereas fixed polynomial refers to $O(n^c)$ for some fixed constant $c$. The most relevant result for use is Theorem 1.1, since it provides security in the standard model under a reasonably powerful adversary model.}
		\label{tab:cvpv_results}
		\begin{tabular}{l l l l l}
			\toprule[.5mm]
			\textbf{Theorem} & \textbf{\vtop{\hbox{\strut Quantum Hardness}\hbox{\strut Assumption}}} & \textbf{\vtop{\hbox{\strut Adversary}\hbox{\strut Entanglement}}} & \textbf{\vtop{\hbox{\strut Adversary}\hbox{\strut Complexity}}} & \textbf{Model} \\ 
			\midrule
			\textbf{1.1} & LWE (Polynomial) & None & Polynomial & Standard \\ 
			\textbf{1.2} & LWE (Subexponential) & Fixed polynomial & Subexponential & Standard \\ 
			\textbf{1.3} & LWE (Polynomial) & Polynomial  & Polynomial & QROM~\cite{Boneh2011-dm} \\
			\bottomrule[.5mm]
		\end{tabular}
	\end{table}
	\subsection{Notation and terminology}
	
	We use $c,C$ for (usually small/large resp.) absolute constants whose value can change from line to line. We denote by $a|b$ the concatenation of strings $a$ and $b$. We use Python conventions for dictionaries and arrays, namely $\textsf{d}.\textsf{keys}$ is the set of keys of a dictionary \textsf{d}, the value stored with key $\textsf{k}$ is given by $\textsf{d}[\textsf{k}]$, and $[a, \dots, b]$ is an ordered list. We denote our security parameter by $\lambda$ and define negligible probability with respect to it.
	
	When we refer to quantum computers, we mean devices that are capable of running fault-tolerant quantum programs at input sizes that are cryptographically relevant (in particular, we will use Shor's algorithm for the discrete logarithm problem~\cite{Shor1994-hb} and Mahadev's proofs of quantumness based on hardness of LWE\cite{Mahadev2022-ts, Regev2010-zq}). In practice, this will likely require access to an order of $10^3$ error-corrected qubits. At any stage of the protocol where participants are associated with public keys, we will refer to entities holding such keys as nodes (with honest nodes holding a unique signing key associated with a public key, and an adversary potentially holding multiple keys). We will refer to an entity in possession of an off-chain resource (such as a quantum computer) as a node, even though such a participant is effectively anonymous from the perspective of the protocol, and is not yet associated with a public key. This participant has the freedom of generating new digital signature keys for every computer it wishes to register (or for the same computer if it is registered more than once). This is analogous to the use of fresh public keys in fully permissionless protocols. The intended meaning should be clear from context.

	\section{Hybrid Consensus with Quantum Position Verification} \label{sec:details}
	
	The proposed protocol shares most of the ingredients of Solida. Recall that Solida runs Byzantine Agreement among a set of Committe Members, specified by their (digital-signature) public keys. The size of the committee is $n$, where the genesis committee is hard-wired as $\mathcal{C}_1=[\textsf{pk}_1,\ldots \textsf{pk}_n]$. The committee has two main roles: (a) approving batches of transactions, and (b) performing \emph{reconfiguration}, in which a new member is elected in a fair manner (based on an off-chain resource which prevents Sybil attacks and censorship), and the oldest member leaves. 
	
	Role (a) in the above uses standard techniques\cite{Castro1999-tp}: the committee decides on a leader, which proposes the batch of transactions to approve. A malicious (Byzantine) leader cannot violate the safety of the transaction (such as no double spends), but \emph{can} violate the progress. So, if there's no progress, eventually the committee elects a new leader, in a round-robin fashion from the current committee.
	
	The main challenge is role (b). Solida's approach is to use Proof-of-Work to determine the next committee member: A reconfiguration event occurs whenever a Proof-of-Work puzzle is solved. This allows users to join the committee in an unbiased fashion: the probability of a user joining the committee is proportional\footnote{Up to corrections due to network delays.} to its hashing power. In the terminology of Lewis-Pye and Roughgarden\cite{Lewis-Pye2023-zp}, Solida and other hybrid protocols use both \textit{on-chain} (committee membership) and \textit{off-chain} (computational power) resources.
	
	This work uses a different reconfiguration rule and off-chain resource. The protocol defines a partition $\mathcal{P}$ of the physical space that participants can occupy. It also maintains a dictionary $\mathcal E$ of \emph{eligible} positions and public keys registered at that position. Specifically, for every registered $\textsf{pos} \in \mathcal{P},$ 
	\begin{equation}
		\mathcal{E}[\textsf{pos}]=[\textsf{pk}_{1},\dots,\textsf{pk}_{k}],
	\end{equation}
	where $[\textsf{pk}_{1},\dots,\textsf{pk}_{k}]$ is a list of public keys that were registered at position $\textsf{pos}$ in the current reconfiguraiton step, in lexicographic order. Any participant can publish their position and public key in order to be included in $\mathcal{E}$. Position registration itself is fully permissionless. To prevent spam at this stage, there is an option to require a proof of quantumness by solving a discrete logarithm puzzle. 
	
	Reconfiguration is then performed by sampling a position uniformly at random from $\mathcal{E}.\textsf{keys}$, using a randomness beacon. An eligible committee member \emph{must} have a quantum computer. The main challenge is double-eligibility: making sure that a single computer would not allow a (dishonest) user to be counted as eligible twice. To prevent double eligibility, we employ \emph{position verification}, and for improved practicality, we utilize a classically verifiable position verification (CVPV) protocol~\cite{Liu2022-ci}. Note that a user with one quantum computer can in principle register two different locations, by registering once, moving the quantum computer to a different cell in the partition $\mathcal{P}$ and registering again at the second location (which must be at least a distance $\Gamma$ from the first). This user would not benefit from an increased probability of being elected onto the committee since he can only successfully run CVPV as the prover from his current position. %In reconfiguration, the randomness beacon provides a uniformly random eligible position from the set of registered positions. %The CVPV and PBFT is done once again, just before reconfiguration is to be approved. 
	The committee then run CVPV, assuming the prover is at the sampled position from $\mathcal{E}.\textsf{keys}$, verifying that the messages from the prover are signed appropriately. By security of CVPV, the prover will succeed only if they possess a sufficiently powerful quantum computer that that position. This mechanism can be used to construct a novel form of Sybil resistance, with favorable properties compared to classical alternatives.
	
	Next, we make our formulation more precise. Our protocol will require the following assumptions:
	\begin{enumerate}
		% \item \textit{Trusted setup:} The genesis committee $\mathcal{C}_1$ has at most $(1/3-\varepsilon)n$ Byzantine nodes. There is a fixed partition of space $\mathcal{P}$ into cells of size $\Gamma$. A lower bound on the possible size of the cells in the partition will be set by the spatial resolution of CVPV, and it would be beneficial to set this as small as possible. We will assume that if an honest node can build a quantum computer in some cell of the partition, it can also prevent an adversary from building one in the same cell. 
		
		\item \textit{Trusted setup:} The genesis committee $\mathcal{C}_1$ has at most $(1/3-\varepsilon)n$ Byzantine nodes for some constant $\varepsilon$. 
		\item \textit{CVPV resolution:} There is a fixed partition of space $\mathcal{P}$ into cells of size $\Gamma$. A lower bound on the possible size of the cells in the partition will be set by the spatial resolution of CVPV, and it would be beneficial to set this as small as possible.
		\item \textit{Cell availability:} An honest node can place a quantum computer in a cell's center, and prevent others from placing their computer in that cell.  
		
		\item \textit{Randomness beacon:} We make this assumption for convenience since it is orthogonal to Sybil resistance which is our main concern, as in e.g.~\cite{Shi2020-aw, Ball2024-qd}. Decentralized randomness beacons are generally a challenge for any protocol that does not rely on Proof-of-Work~\cite{Bonneau2022-di}. Since a randomness beacon can be constructed using a random oracle~\cite{Andrychowicz2014-qt}, it is not a stronger assumption. 
		\item \textit{Communication:} i) Synchronous %\onote{In the partially synchronous model needs to rely on Global Stabilization Time (GST) which I don't think we even mention. Why not synchronous? At some point it is mentioend that Perfect Synchrony is not physical. I'm not sure whether perfect synchrony and synchrony is the same thing, and why is it unphysical. I interpret syncrhony as an upper bound on the communication time: there is some delta so that if message is sent it time t, it will be received by the receiver in time t + delta. We are also using a global clock, so I don't how this is compatible with the partially synchronous model (where you \emph{can't} tell the time...)} \dar{I mean here whatever model for which BPFT works, which is partial synchrony ($\Delta+GST$). Synchrony means no GST, while perfect synchrony is $\Delta=0$ which I was saying is unphysical but we can ignore it since it's too idealized.}\onote{Thanks, ok, I understand the difference between perfect synchrony and synchrony, and therefore why perfect synchrony is non-phycial. But Solida works in the synchronous model, and we rely on it. I think we won't work for the same reasons they do. Essentially, I think that property 1 wouldn't hold: before GST, the adversary can delay messages, and in particular, only allow malicious parties to join the committee (by simply delaying all the messages of honest nodes outside the committee). This way, honest nodes are kicked out of the committee due to the election of new (byzantine) members. }\dar{Solida uses a BPFT protocol that only require partial synchrony. In fact they have a paragraph on page 4 justifying their choice of the synchronous setting because they use a protocol that doesn't require it. }\onote{I don't get it. Do you agree that Solida's analysis is in the synchronous setting? In their words "We adopt the network model of Pass et al. [25]: a bounded message delay of $\delta$ that is known apriori to all participants. Note that this is the standard synchronous network model in distributed computing...". And later again: "as mentioned, we adopt a synchronous network model: whenever a participant sends a message to another participant, the message is guaranteed to reach the recipient within $\delta$ time." They explain why they use PBFT, which might be weird since it works even in the partially synchronous model, and indeed, they explain this unintuitive choice on p.4. But I think their analysis is in the synchronous model.}\dar{Their analysis is in the synchronous model yes, but they use a protocol that also works in partial synchrony. They also have timing issues related to PoW that we don't. I'm wondering if ours can be easily made to work in partial synchrony, though maybe it's risky trying to figure this out in two days.} 
		communication between nodes on the committee during $\textsf{SteadyState}$ and $\textsf{ViewChange}$, with maximal message delay $\Delta$. ii) Sufficiently low latency communication between verifiers and provers in order to run CVPV in $\textsf{Reconfiguration}$. iii) All participants (whether on the committee or no) can communicate using a gossip protocol~\cite{Demers1987-rr, Jelasity2007-zy}, but without any guarantee of consistency.
		\item Assumptions requires for CVPV security (see \Cref{sec:pv}). In particular, we assume quantum hardness of LWE. Security against adversaries pre-sharing unbounded entanglement requires the QROM, but as long as entanglement is polynomially bounded CVPV is secure in the standard model. It is also worth noting that sharing entanglement across long distances that can then be used for computation is an extremely challenging engineering problem, usually requiring transduction between optical and memory qubits~\cite{Lauk2020-du}.
		\item \label{ass:bounded_qcomp} All participants (byzantine or honest) have polynomially bounded computational power (both quantum and/or classical).
		\item Byzantine nodes control at most a fraction $\rho$ of quantum computers at any given time slot. Security of the protocol is guaranteed when $\rho < 1/3-\varepsilon$.
		\item \label{ass:rom} (For spam resistance) Random Oracle. It is important to note that we only use the Random Oracle assumption to mitigate spam, unlike PoW where it is essential to guarantee Sybil resistance and hence security.
		\item \label{ass:dlog}(For spam resistance) \textit{DLP assumption in $\mathcal{QR}_p$:} Classical average-case hardness of the discrete logarithm problem in the subgroup of quadratic residues \cite[Definition 10.6]{BS23}. More precisely, let $p=2q+1$ where $p$ and $q$ are primes ($p$ is often called a safe prime), let $\mathcal{QR}_p \subseteq \mathbb Z_p^*$ be the quadratic residues subgroup of order $q$, and $g$ a generator of $\mathcal{QR}_p$. For $h\in \mathcal{QR}_p$, the discrete logarithm $\log_g(h)$ is defined as the unique $x \in \mathbb Z_q$ such that $g^x\equiv h \pmod p$. We assume that for a uniformly random\footnote{Equivalently, and as we will do later, $h$ may be sampled as $h := t^2 \pmod p$ for uniform $t\leftarrow \mathbb{Z}_p^*$.} $h \in \mathcal{QR}_p$ , given $p,g,h$, computing the discrete logarithm $\log_g(h)$ cannot be done classically by a probabilistic polynomial time algorithm, except with a success probability that is negligible in the security parameter. We denote by $R_H(\lambda)$ and $R_A(\lambda)$ a lower and upper bound respectively on the rate at which an honest participant and an adversary can solve the discrete log problem with security parameter $\lambda$ on a quantum computer. 
		\item The cost of running a quantum computer, whether for solving the discrete log or CVPV, significantly outweighs the cost of running CVPV as a verifier by the entire committee (which requires only classical computation and communication) and running PBFT. This is reasonable given the significant overheads of quantum error correction, which are fundamental in leading hardware platforms such as superconducting qubits due to their local connectivity constraints~\cite{Babbush2021-nu}. 
	\end{enumerate}

	%In normal operation the committee $\mathcal C_i$ runs a PBFT protocol\cite{Castro1999-tp}. 
	
	%Each committee member runs a CVPV with the new node. Then, the committee runs a PBFT to decide wehther the new node has passed the verification. 
	%\dar{Can they submit a commitment to a position instead of a position in plaintext? This may not be compatible with the need to check for overlaps but I'm not sure if it's necessary.}\onote{I think it is problematic: using one computer, you could commit to two (very close) positions. Why are you asking this question? Are you concerned about DOS attacks on the user (so that the positions are confidential until you join the committee)? If this is the reason, indeed, I can't think of a way to achieve that.} \dar{yeah I'm worried about this type of attack. I wonder if for example there is a commitment that could reveal this "non-overlapping" property without revealing anything else. It looks like this is possible with zero knowledge proofs.}
	%The committee checks that all existing eligible members are at least $\Gamma$ far from the applicant. Here, $\Gamma$ represents the resolution in which CVPV can be achieved. Next, acting committee members perform CVPV with the new applicant. The committee runs PBFT to determine whether the applicant becomes eligible. 

	The operation of the protocol can be divided into distinct stages:
	\begin{itemize}
		\item $\textsf{SteadyState}:$ The ``normal'' operation mode in which blocks of transactions are proposed by a leader on the committee and other members vote in order to finalize the blocks.
		\item $\textsf{ViewChange}:$ If a Byzantine leader does not propose blocks withing a time frame (ultimately set by the maximal message delay,  $\Delta$), the committee replaces the leader by initiating a view change.
		\item $\textsf{Reconfiguration}:$ The committee is modified by adding a member that has proved it holds some external resource in order to guarantee Sybil resistance. An existing member is removed in the process.
	\end{itemize}
	
	$\textsf{SteadyState}$ and $\textsf{ViewChange}$ depend solely on on-chain resources so we leave them unchanged from the Solida protocol~\cite{Abraham2016-ii}. For completeness, we include a description of these parts of the algorithm in \Cref{app:solida_algs}. $\textsf{Reconfiguration}$ is Solida is based on PoW (and is consequently event-driven). We replace it by a procedure based on CVPV that occurs every fixed number of times steps.
	
	The basic protocol is presented below. Nodes in $\mathcal{C}_i$ are identified by their public keys. Other participants specified in the algorithm are (honest) entities that can control quantum computers at some point in time during the execution. Stages in the algorithm which require interaction are assumed to be run for a time that allows an honest player to participate (for example, when registration requires computing the discrete log, we assume the time between registration periods is sufficient to compute on a quantum computer).  
	
	All steps below, unless specified otherwise (in a right-justified comment), can be performed by any party wishing to keep up with the protocol, and rely only on public information. Any classical party can efficiently verify the state of the protocol in this way, and for example read off the current makeup of the committee $\mathcal{C}_i$.

	% \begin{minipage}{.95\textwidth}
		\begin{algorithm}[htbp]
			\caption{Hybrid consensus with Quantum Proof-of-Position} 
			\label{alg:hcqpv}
			\begin{algorithmic}[1]
				%\State \textit{Global:}
				\State Initialize committee $\mathcal{C}_1$, eligible position dictionary $\mathcal{E} = \emptyset$, $t=1$, $i=1$.
				
				% \\ \rule{\textwidth}{0.4pt} \\
				% \textit{Every node in $\mathcal{C}_i$:}
				\While{True}
				\If{$t = 0 \mod \tau_{\text{reconfig}}$}
				\State Update $\mathcal{E}$ using \cref{alg:pq_reg_rom}. 
				\While{$|\mathcal{E}|>0$} \label{step:cand_elec}
				\State Use randomness beacon to sample candidate $\mathsf{pos}$ from $\mathcal{E}.\textsf{keys}$. \label{step:sample_cand} 
				\State Set $(\mathsf{cpos}, \mathsf{cpks}) \leftarrow (\mathsf{pos}, \mathcal{E}[\mathsf{pos}])$
				\State Remove $\mathsf{pos}$ from $\mathcal{E}.\textsf{keys}$.
				\State Run \cref{alg:pv_comm} with $(\mathsf{cpos},\mathsf{cpks})$. \Comment{$\mathcal{C}_i$ and prover.}
				\If{\cref{alg:pv_comm} returns Success}
				\State \textbf{break}
				\EndIf
				% \If{At least $2n/3$ members of $\mathcal{C}_i$ published $(1, \mathsf{pk})$}
				%     \State Add $(\mathsf{pk}, \mathsf{pos})$ to $\mathcal{C}_i$, kick out most senior member.
				%     \State Increment $i$.
				%     \State \textbf{break}
				% \EndIf
				\EndWhile \label{step:cand_elec_end}
				
				\Else
				\State Run $\textsf{SteadyState}$ for $T'$ slots (and $\textsf{ViewChange}$ as needed). \Comment{$\mathcal{C}_i$.}
				\EndIf
				\State Increment $t$.
				\EndWhile

			\end{algorithmic}
		\end{algorithm}
		% \end{minipage}
	
	% \begin{minipage}{.95\textwidth}
		\begin{algorithm}[htbp]
			\caption{Position Verification with Committee $\mathcal{C}$ and claimed position and keys $(\mathsf{pos}, \mathsf{pks})$} 
			\label{alg:pv_comm}
			\begin{algorithmic}[1]
				\For{$\textsf{pk} \in \textsf{pks}$}
				\For{$j \in \mathcal{C}$} \Comment{(Between times $t_0 + (j - 1) \tau_v$ and $t_0 + j \tau_v$)}
				\State Run CVPV as verifier, assuming prover at $\mathsf{pos}$. \Comment{j.}
				\State \label{step:cvpv_proof}Run CVPV as prover, signing messages with $\textsf{pk}$. \Comment{(Honest) prover.}
				\If{CVPV succeeds and all messages signed by $\textsf{pk}$} \Comment{j.}
				\State Set $r_{j}=1$. 
				\Else
				\State Set $r_{j}=0$.
				\EndIf
				\EndFor
				\State Run Byzantine Agreement with inputs $r_{j}$, returning $v_j$. \Comment{$\mathcal{C}.$}
				\State Each committee member publishes $v_j$. \Comment{$\mathcal{C}.$}
				%\State If cvpv worked, replace committee member, break
				\If{At least $2n/3$ members of $\mathcal{C}_i$ published $1$} \label{step:publish_results}
				\State Add $(\mathsf{pk}, \mathsf{pos})$ to $\mathcal{C}_i$, kick out most senior member.
				\State Increment $i$.
				\State \textbf{Return} Success
				\EndIf
				\EndFor
				\State \textbf{Return} Failure
			\end{algorithmic}
			
		\end{algorithm}
		% \end{minipage}
	
	Processing $\textsf{pks}$ one at a time in some arbitrary order ensures that all committee members are considering the same candidate at each stage. The purpose of the publishing of the results in \Cref{step:publish_results} is to enable other participants not on the committee to keep track of the state of the committee. Since the published $v_j$ are the outputs of Byzantine Agreement, all honest members will publish the same value, and hence a simple quorum of $2/3$ provides sufficient evidence for any external party whether CVPV succeeded or failed. The CVPV steps refer to running either Construction 5.10 or Construction 6.2 from \cite{Liu2022-ci}, depending on whether the adversary is assumed to possess an arbitrary amount of pre-shared entanglement or not.

	% \begin{minipage}{.95\textwidth}
		\begin{algorithm}[htbp]
			\caption{Location registration 
			}
			\label{alg:pq_reg_rom}
			\begin{algorithmic}[1]
				\Statex \textit{Spam-resistant version:}
				\State A public parameter $p=2q+1$ for $p,q$ primes ($p$ is often called a safe prime), and a generator $g$ for the (cyclic) subgroup of quadratic residues of $\mathbb Z_p^*$. 
				\State Use randomness beacon to sample a string $r$. Start timer $t'=0$.
				\While{$t' < \tau_{\text{register}}:$}
				%\State \label{step:factoring_poq}Publish $(\textsf{pk},\textsf{pos},F(H(\textsf{pk}|\textsf{pos}|r)))$. \Comment{Any party wishing to join $\mathcal{C}.$}
				\State \label{step:dlog_poq} Publish $(\textsf{pk},\textsf{pos},\log_g(H(\textsf{pk}|\textsf{pos}|r)^2 \pmod p))$ \Comment{Any party.}
				\EndWhile
				\State Wait until $t'=\tau_{\text{register}}+\Delta$.
				\For{Each message $(\textsf{pk},\textsf{pos},x)$ received}
				\If{$g^x\equiv H(\textsf{pk}|\textsf{pos}|r)^2 \pmod p $}
				\State Append $\textsf{pk}$ to $\mathcal{E}[\textsf{pos}]$, maintaining order.
				\EndIf
				\EndFor
				\Statex ({\bf OR}) 
				\Statex \textit{Plain version:}
				\setcounter{ALG@line}{0}
				\State Start timer $t'=0$.
				\While{$t' < \tau_{\text{register}}:$}
				\State Publish $(\textsf{pk},\textsf{pos})$. \Comment{Any party.}
				\EndWhile
				\State Wait until $t'=\tau_{\text{register}}+\Delta$.
				%\State Append $\textsf{pk}$ to $\mathcal{E}[\textsf{pos}]$ for every message received, maintaining order.
				\For{Each message received}
				%\If{$F(H(\textsf{pk}|\textsf{pos}|r)))$ is valid}
				\State Append $\textsf{pk}$ to $\mathcal{E}[\textsf{pos}]$, maintaining order.
				%\EndIf
				\EndFor
			\end{algorithmic}
		\end{algorithm}
		% \end{minipage}
	
	Denote by $H$ a public hash function, modeled as a Random Oracle, and by $\log_g(h)$ the discrete logarithm of $h$ (see the DLP assumption for more details on p.~\pageref{ass:dlog}). 
	
	Note that the steps above that update of $\mathcal{E}$ can be performed using public information by any participant. The ordering of $\mathcal{E}.\textsf{keys}$ is arbitrary but fixed, and so also public knowledge, which renders the sampling in \Cref{step:sample_cand} of \Cref{alg:hcqpv} unbiased and independent between reconfigurations. The random string $r$ is chosen such that the DLOG problem is secure with security parameter $\kappa$.
	
	\section{Consistency, Liveness and Spam prevention} \label{sec:consistency_liveness}
	
	We prove that our protocol satisfies the following:
	\begin{proper}[Consistency and Liveness] \label{prop:consistency_liveness}
		Except with negligible probability, up to some $T=\poly(\lambda)$:
		\begin{enumerate}
			\item Consistency: 
			\begin{itemize}
				\item No rollbacks: If a transaction $\textsf{tr}$ is confirmed by participant $p$ at some time $t$, it remains finalized for every $t'>t$\footnote{Confirmation can be formalized as a function that maps a set of messages to a subset of messages that are valid w.r.t. the initial resource distribution (in our case the Genesis committee $\mathcal{C}_1$), as in ~\cite{Lewis-Pye2023-zp}. The longest-chain confirmation rule of Bitcoin for example, is that a transaction is sufficiently deep on the longest chain.}.
				\item No conflicts: If $T,T'$ are confirmed transactions for honest $p,p'$ at times $t,t'$, then $T \cup T'$ is a valid set of transactions with respect to the initial committee $\mathcal{C}_1$.
			\end{itemize}
			\item Liveness: A valid transaction received by an honest player at some time $t$ is eventually (after time $t+\poly(n, \Delta,K)$, where $K$ is the maximum number of position registrations) confirmed by every other honest node that is active at that time. 
		\end{enumerate}
	\end{proper}

	% We show that the protocol guarantees that all honest members commit the same value for each slot without rollbacks (consistency) and keeps moving to future slots, allowing new participants to enter (liveness) \dar{This is basically the definition in the Solida paper, though we could formalize everything more (a la \cite{Lewis-Pye2023-zp})}. 
	Our protocol enforces Sybil resistance based on position verification. Specifically, it possesses the following property:
	
	\begin{proper}[Sybil resistance from Quantum-Proof-of-Position] \label{claim:syb}
		%Assume security of CVPV holds for all time slots. 
		%\onote{Are they assumed to be honest? If they are allowed to be malicious, then I think it holds only if $f<\frac{n}{3}$. Additionally, I think a) the equation should hold only for an honest particpant i. b) it is a lower-bound rather than equality (the malicious parties can do stupid things, and increase the probability of an honest node to get accepted).\\
			%It is probably worth it to prove or to the very least explain why this property holds.} \dar{Agreed, this assumes optimal behavior from the adversaries, but it's the standard (cleanest) way to define sybil resistance and should be obvious (similarly if you have ASICs but don't use them you can't win PoW). This is just meant to clarify what the resource is in our problem is. This property is proved later inductively (since we need safety of the committee to ensure that the position selected does get added). I added a footnote to clarify.}
		Given $N$ participants each possessing $\mu_i$ quantum computers at distinct cells in the partition $\mathcal{P}$ that are online during the $j$th reconfiguration step,
		\begin{equation}
			\mathbb{P}\left[\text{participant }i\text{ is added to } \mathcal{C}_j\right]=\frac{\mu_{i}}{\underset{k=1}{\overset{N}{\sum}}\mu_{k}},
		\end{equation}
		and moreover these events are independent between reconfiguration rounds\footnote{This assumes all parties are using the computers in their possession to participate in the protocol, including Byzantine nodes, which is the worst-case scenario from the perspective of honest nodes. If this were not the case, the equality could be replaced with an $\geq$.}.
	\end{proper}
	In proving liveness of our protocol, we will show that it satisfies \Cref{claim:syb} as part of an inductive argument (since this property can be used to guarantee safety of the committee up to a certain reconfiguration step $j$, which in turn implies that this property also applies when $\mathcal{C}_j$ performs reconfiguration).
	% \begin{proof}
		%     Steps \ref{step:sample_cand} of \Cref{alg:hcqpv} samples a candidate uniformly at random from $\mathcal{E}_j$. By security of CVPV, only a candidate possessing a quantum computer at the claimed position will pass and be added to $\mathcal{C}_j$. If a computer is offline, or if an adversary registered multiple positions using a quantum computer located elsewhere, CVPV will fail with all honest committee members ...
		% \end{proof}
	We will additionally be interested in spam mitigation, since the position registration process is effectively fully-permissionless. To be more precise, we ensure the following
	\begin{proper}[Quantum Spam-Resistance] \label{prop:spam_resistance}
		\begin{enumerate}[i)]
			\item \label{prop:spam_invalid} A message $m$ that is invalid (either not signed by a valid authority or not containing a correct solution to the discrete logarithm puzzle) can be identified by an efficient, local, classical computation.
			\item A message $m$ that contains a correct solution to the discrete logarithm puzzle cannot be produced without cost (i.e. without access to a quantum computer).
			\item Any honest party with a quantum computer can send valid messages.
		\end{enumerate}
	\end{proper}
	The importance of \Cref{prop:spam_invalid} above is that it allows any participant to detect an invalid discrete logarithm solution locally and prevent such messages (which can be from any source) from being propagated further in the network, since the public ledger used by the protocol is usually implemented using a gossip protocol. This is analogous to invalid claims of a solution to a PoW puzzle being detectable locally, and prevents the overhead of propagating such messages to the rest of the network. 
	
	Safety and liveness of Solida, conditioned on safety of each committee, follows directly from results for the corresponding PBFT protocol. We leave this unchanged in our version:
	
	\begin{theorem}[Safety and Liveness of Solida\cite{Abraham2016-ii}] The protocol achieves safety and liveness if each committee has no more than $f < n/3$ Byzantine members.
	\end{theorem}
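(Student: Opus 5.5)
The plan is to reduce the statement to the standard guarantees of the underlying PBFT protocol~\cite{Castro1999-tp}, applied one committee at a time. Solida's execution splits into a sequence of \emph{epochs} $\mathcal{C}_1,\mathcal{C}_2,\dots$, each a fixed permissioned set of $n$ public keys. Within an epoch the protocol is exactly PBFT's $\textsf{SteadyState}$ and $\textsf{ViewChange}$ (\Cref{app:solida_algs}). Each reconfiguration is itself decided by a PBFT-style agreement instance run by the outgoing committee. I will induct on the epoch index $i$.

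For a single epoch with at most $f<n/3$ Byzantine members, I would invoke the two standard PBFT facts. The first is quorum intersection: any two quorums of size $2f+1$ intersect in at least one honest member. Since honest members never sign conflicting prepares or commits for the same sequence number and view, no two conflicting values can both be committed. The view-change certificate, which carries the highest prepared value, then ensures that a value committed in one view is re-proposed in every later view. Together these give no conflicts and no rollbacks within the epoch. The second fact is liveness under synchrony with delay $\Delta$. A Byzantine leader who stalls is replaced after a timeout of $O(\Delta)$. Leaders rotate round-robin, so within at most $f+1$ view changes an honest leader proposes, and every pending valid transaction is included within $O(n\Delta)$ time.

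The inductive step across reconfigurations is the part I expect to need the most care. The reconfiguration decision, which fixes who joins, who leaves, and the final log of $\mathcal{C}_i$, is committed by $\mathcal{C}_i$ using the same quorum mechanism. By safety of $\mathcal{C}_i$, all honest members agree on $\mathcal{C}_{i+1}$ and on the log prefix it inherits. Anyone, including the new member, can verify this from a quorum certificate of $2f+1$ signatures. The hypothesis gives that $\mathcal{C}_{i+1}$ also has fewer than $n/3$ Byzantine members, so it continues from an agreed prefix. Consistency therefore composes: a transaction finalized in epoch $i$ lies in the inherited prefix of every later epoch. Conflicts across epochs are excluded because each $\mathcal{C}_{i+1}$ extends only the certified log.

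The main obstacle is the epoch boundary itself. I must rule out an old committee, after being replaced, forging a competing chain of certificates. I would handle this by noting that validity is defined relative to $\mathcal{C}_1$ by following certified reconfigurations. Each link in that chain is unique by the safety of the committee that certified it. A formal treatment would also need concurrent reconfigurations to be serialized, which Solida achieves by treating the reconfiguration proposal as an ordinary slot in the log. Liveness then follows by applying the per-epoch bound in whichever epoch the transaction arrives, or the next one if a reconfiguration interrupts.
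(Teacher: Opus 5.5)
Your overall reduction, PBFT guarantees inside each committee composed through a certified hand-off at every reconfiguration, is the one the paper relies on; the paper gives no argument of its own beyond deferring to PBFT and to Solida~\cite{Abraham2016-ii}. Your safety half is sound, with two caveats. The quorum-intersection claim should be read with $n=3f+1$, or with quorums of size $n-f$ when $n>3f+1$. And the uniqueness of each certified reconfiguration link needs retired committees to keep fewer than $n/3$ Byzantine members as well.

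The liveness half has a genuine hole. You argue that a Byzantine leader is removed by an $O(\Delta)$ timeout, so an honest leader proposes within $f+1$ view changes and every pending transaction is included within $O(n\Delta)$. In the $\textsf{ViewChange}$ of \Cref{app:solida_algs}, however, the timer is attached to slots, not to transactions. A member abandons the leader only if slot $s$ is not committed within $4\Delta$, or no slot is committed within $8\Delta$ of a new view. A Byzantine leader that keeps committing valid batches which simply omit a given $\textsf{tx}$ makes progress and never triggers a view change. It can therefore censor $\textsf{tx}$ for the whole epoch, so the per-epoch bound your last sentence relies on does not exist.

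This is exactly where the protocol is not ``exactly PBFT''. Castro--Liskov obtain liveness for individual requests from request timers: a replica holding an unexecuted request times out and starts a view change~\cite{Castro1999-tp}. Solida instead replaces the leader on every PoW-driven reconfiguration, which is the leader update the appendix says this version does without.

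You need one of two ingredients to close the gap. The first is a transaction-based view-change trigger, after which at most $f$ consecutive censoring leaders can occur. The second is a cross-epoch argument that leadership moves at each reconfiguration. For example, with a seniority-ordered $\mathcal{C}_c$ and $L(c,v)$ its $(v \bmod n)$-th member, each reconfiguration hands the role to the next joiner. Any $f+1$ consecutive joiners sit in a single committee with at most $f$ Byzantine members, so an honest leader is in place within $f+1$ reconfigurations. In that case the confirmation delay is measured in reconfiguration periods rather than $O(n\Delta)$.
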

	
	In order to modify Solida to work with position verification, we only need to modify Theorem 2:
	\begin{theorem}[Safety and Liveness of Solida Reconfiguration\cite{Abraham2016-ii}]
		Assuming $f< n/3$ holds for $\mathcal{C}_1$, then $f < n/3$ holds for each subsequent committee except for a probability exponentially small in $n$ if $\rho' (\rho, D, \Delta) < 1/3$\footnote{This theorem as stated in~\cite{Abraham2016-ii} cannot hold for arbitrarily large times $T$. Rather, this result applies for any specific time slot $t$, and can then be uniformized.}. 
	\end{theorem}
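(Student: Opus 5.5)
The plan is to reduce committee safety to a statement about a sequence of Bernoulli trials, one per reconfiguration event, and then apply a concentration bound over a sliding window of size $n$. In Solida, the committee $\mathcal{C}_j$ consists of the last $n$ members added. The first $j-1$ of these positions, for $j\le n$, are still genesis members. So the number $f_j$ of Byzantine members of $\mathcal{C}_j$ is at most the number of Byzantine genesis members still present, plus the number of Byzantine winners among the at most $n$ most recent reconfiguration events. The core of the argument is a per-event bound. Conditioned on the entire history up to the start of the $k$-th reconfiguration, and provided the current committee is safe, the winner of event $k$ is Byzantine with probability at most $\rho'(\rho,D,\Delta)$. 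Here $\rho$ is the adversary's fraction of hashing power, $D$ is the puzzle difficulty (equivalently the expected inter-solution time), and $\Delta$ is the message delay.

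\textbf{Step 1: the per-event bound on $\rho'$.} Solutions arrive as a Poisson process whose rate is proportional to total hashrate and inversely proportional to $D$. An honest solution can fail to be used only if another solution appears within the $\Delta$-window before it propagates and is ordered by the (safe) committee. In that window, the adversary may release a withheld solution, and the adversary controls the delivery order. I would bound the probability that an honest solution is lost by $1-e^{-\Delta/D}$ (up to constants), and credit every such loss to the adversary. This yields an effective fraction of the form
\[
\rho'(\rho,D,\Delta)=\frac{\rho}{\rho+(1-\rho)e^{-c\Delta/D}},
\]
which tends to $\rho$ as $\Delta/D\to 0$. Committee safety at event $k$ enters here: a safe committee, by the safety and liveness of Solida above, processes the first validly delivered solution and cannot be steered by the Byzantine members. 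This is why the argument must be an induction on $k$.

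\textbf{Step 2: concentration.} Let $X_k$ be the indicator that the $k$-th winner is Byzantine. By Step 1, $\mathbb{E}[X_k\mid\mathcal{F}_{k-1}]\le\rho'$ on the event that all earlier committees were safe. Hence the partial sums over any window of $m\le n$ consecutive events are stochastically dominated by $\mathrm{Bin}(m,\rho')$; formally I would use a supermartingale argument with Azuma or Freedman, since the $X_k$ are not independent. Consider a committee containing $n-m$ genesis members and $m$ new members. It has at most $(1/3-\varepsilon)(n-m)+\mathrm{Bin}(m,\rho')$ Byzantine members. This exceeds $n/3$ with probability at most $\exp(-c\,n\min(\varepsilon,1/3-\rho')^2)$: when $m$ is small the genesis slack $\varepsilon(n-m)$ absorbs the deviation, and when $m$ is large the Chernoff bound for $\mathrm{Bin}(m,\rho')$ with $\rho'<1/3$ does. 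I would then union-bound over the polynomially many committees up to a fixed time slot $t$, as the footnote indicates. The induction hypothesis (all previous committees safe) is discharged by the same union bound.

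\textbf{Main obstacle.} I expect Step 1 to be the hardest part: turning the delay adversary, who withholds solutions, reorders messages within $\Delta$, and races honest solutions, into a clean conditional bound $\rho'$ that holds uniformly over histories. I would first handle the case where the adversary publishes immediately, and then argue that withholding only shifts the adversary's own Poisson arrivals and never increases their winning probability beyond the worst-case race bound over a $\Delta$-window.
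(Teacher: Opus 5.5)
The paper gives no proof of this statement. It is quoted from Solida as the result being replaced, and the paper's own argument is the proof of its QPoP analogue, \Cref{thm:safery_committee}. Your Step 2 has the same skeleton as that proof: split on the first unsafe committee, bound the per-round probability of a Byzantine winner, apply Chernoff, and union-bound over polynomially many rounds. Replacing the paper's independence claim with conditional bounds plus stochastic domination is an improvement.

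The problem is the inequality that carries Step 2. You bound the Byzantine genesis members still present after $m$ reconfigurations by $(1/3-\varepsilon)(n-m)$, and nothing gives this. A bound on $f_1$ says nothing about where the Byzantine genesis members sit in the seniority order. If they are the most junior, all $f_1$ of them survive the first $n-f_1$ reconfigurations, and the committee then has $f_1+\mathrm{Bin}(m,\rho')$ Byzantine members. With $f_1\approx(1/3-\varepsilon)n$ and $m=n-f_1$, this is about $(1/3-\varepsilon)n+(2/3+\varepsilon)\rho' n$. That exceeds $n/3$ with overwhelming probability whenever $(2/3+\varepsilon)\rho'>\varepsilon$, e.g.\ $\varepsilon=0.05$, $\rho'=1/4$. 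So the step cannot be patched: the conclusion is false under any hypothesis that only counts the Byzantine genesis members. It is worse under the hypothesis actually stated, $f_1<n/3$ with no slack, which you silently strengthened. If $f_1=\lceil n/3\rceil-1$ and the most senior member is honest, the first reconfiguration breaks safety whenever its winner is Byzantine, an event of constant rather than exponentially small probability.

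The argument needs a hypothesis on the composition of $\mathcal{C}_1$, not just its Byzantine count. One option is the suffix condition your bound presupposes: for every $m$, at most $(1/3-\varepsilon)(n-m)$ of the $n-m$ most junior genesis members are Byzantine. Under it, your Hoeffding computation giving $\exp(-c\,n\min(\varepsilon,1/3-\rho')^2)$ is correct. The other option is a random genesis, in which each seat is Byzantine independently with probability at most $\rho'$. This is what the paper's analogous proof implicitly assumes when it treats genesis seats as Bernoulli trials with $\mathbb{E}[f_0]=n\rho$. Then $f_t$ is dominated by a single binomial at every $t$, and no genesis slack is needed.

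Separately, your Step 1 has no counterpart in the paper. There the per-round bound is exactly $\rho$ from beacon sampling once CVPV security is conditioned on, which is why $\Delta$ plays no role. For the Solida statement it is cleaner to define $\rho'$ as the worst-case conditional probability that a winner is Byzantine given a safe history. Your closed form for $\rho'$ is unargued, and so is your claim that withholding never helps. If the next puzzle depends on the previous reconfiguration, that claim would have to rule out a selfish-mining strategy that uses a withheld solution to get a head start on the next puzzle.
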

	
	Here $\rho' \geq \rho$ is the effective proportion of external resource held by Byzantine nodes. It is a function of $\rho$, the maximal message delay $\Delta$ and the expected time for the network to solve the PoW puzzle $D$, and increases with $\Delta$.

	Denote by $f_i$ the number of byzantine committee members on the committee $\mathcal{C}_i$. This is a random variable that depends on the randomness of both position verification and sampling a location from the list of registered locations $\mathcal{E}.\textsf{keys}$. We show the following:
	
	\begin{theorem}[Safety of Committee in \Cref{alg:hcqpv}] \label{thm:safery_committee} Denote by $K$ an upper bound on the number of candidate locations $|\mathcal{E}.\textsf{keys}|$. 
		Assume that for some constant $\varepsilon < 1/3$,
		\begin{enumerate}[i)]
			\item 
			\begin{equation}
				f_1 < (1/3-\varepsilon)n.
			\end{equation}
			\item 
			\begin{equation}
				\rho < 1/3-\varepsilon.
			\end{equation}
			\item CVPV is secure with security parameter $\lambda$, and  
			\begin{equation}
				\lambda>\log K.
			\end{equation}
            \item 
            \begin{equation}
                n > \lambda^c
            \end{equation}
            for some absolute constant $c$ (which can be smaller than $1$).
		\end{enumerate}
		Then there is a constant $c(\varepsilon)$ such that for any $T = \poly(\lambda)$ and sufficiently large $n$, the protocol in ~\Cref{alg:hcqpv} obeys
		\begin{equation}
			\mathbb{P}\left[\forall t\leq T:f_{t}<n/3\right]\geq1-e^{-c(\varepsilon)\lambda}.
		\end{equation}
	\end{theorem}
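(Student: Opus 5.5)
We argue by induction on the reconfiguration index, with a union bound over all steps up to $T$. Call reconfiguration step $j$ \emph{good} if every CVPV execution during it behaves as intended. That is, a candidate passes only if some party really holds a quantum computer in the claimed cell and signs with the registered key, and an honest candidate with a computer at its claimed cell always passes.

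\textbf{Step 1: security of CVPV over the whole run.} By the security of CVPV with parameter $\lambda$, each single execution fails with probability at most $\mathrm{negl}(\lambda)$; we take this to be $e^{-c\lambda}$. Within one reconfiguration at most $K$ positions are tried, and each has polynomially many keys. There are at most $T=\poly(\lambda)$ reconfigurations. A union bound over at most $T\cdot K\cdot\poly$ executions therefore shows that all steps are good except with probability $T K\,\poly(\lambda)\, e^{-c\lambda}$. Since $\lambda>\log K$, this is $e^{-c'\lambda}$.

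\textbf{Step 2: conditional sampling.} Assume all steps are good and $f_j<n/3$. By Theorem 1 (safety of Solida with $f<n/3$), Byzantine Agreement inside \Cref{alg:pv_comm} outputs each $r_j$ correctly, so at least $2n/3$ honest members publish the true CVPV result. The loop in \Cref{alg:hcqpv} samples registered positions uniformly without replacement and stops at the first one that passes. Positions that fail are exactly those with no computer present, and they are simply discarded. Hence the accepted position is uniform over the cells actually occupied by online computers. Its owner is Byzantine with probability at most $\rho<1/3-\varepsilon$. This is \Cref{claim:syb}, and the fresh beacon randomness in each step makes these events independent across steps. Liveness (some position passes) uses cell availability for honest nodes; we need at least one honest computer to exist, which follows from $\rho<1$.

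\textbf{Step 3: concentration.} This is the main obstacle, because $f_t$ counts a sliding window. The committee at time $t$ consists of the last $n$ members added, plus leftover genesis members if $t<n$. For $t\ge n$, $f_t$ is therefore dominated by a sum of $n$ independent Bernoulli$(\rho)$ variables. Chernoff gives $\mathbb{P}[f_t\ge n/3]\le e^{-2\varepsilon^2 n}$. For $t<n$, $f_t$ is at most the $(1/3-\varepsilon)n$ genesis Byzantine members remaining, plus a Binomial$(t,\rho)$ count of added Byzantine members. Its mean is below $(1/3-\varepsilon)n$, so the same Chernoff bound applies with deviation $\varepsilon n$. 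The induction is subtle because the Bernoulli bound for step $j$ holds only on the event $f_j<n/3$. We handle this by coupling: define i.i.d. Bernoulli$(\rho)$ variables $X_j$ that agree with the actual Byzantine indicators as long as all committees so far are safe. The first failure of the real process is then a failure of the coupled sliding-window sums, and it suffices to bound the latter.

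\textbf{Step 4: combining.} A union bound over $t\le T$ gives total failure probability at most $T e^{-2\varepsilon^2 n}+e^{-c'\lambda}$. Using $n>\lambda^c$ and $T=\poly(\lambda)$, for sufficiently large $n$ this is at most $e^{-c(\varepsilon)\lambda}$, possibly after adjusting the constants. If $c<1$ we note that the final rate is governed by the $e^{-c'\lambda}$ term together with $\min(n,\lambda)$. This bookkeeping, reconciling the $n$-dependent Chernoff term with a bound stated in $\lambda$, is the step where the constants must be chosen with care.
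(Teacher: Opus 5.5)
\textbf{Gap in Step 3, for $t<n$.} You bound $f_t$ by the Byzantine genesis members still present plus $\mathrm{Bin}(t-1,\rho)$, and you control the first term only by $f_1<(1/3-\varepsilon)n$. The mean of that bound is then up to $f_1+(t-1)\rho$, not ``below $(1/3-\varepsilon)n$''. It exceeds $n/3$ once $(t-1)\rho>n/3-f_1$, which for $f_1$ close to $(1/3-\varepsilon)n$ happens after roughly $\varepsilon n/\rho$ reconfigurations.

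This is not slack in your estimate. Hypothesis i) says nothing about where the Byzantine genesis members sit in the seniority order. Suppose they are the most junior members of $\mathcal{C}_1$, so they are evicted last. Then the honest genesis members are removed first, $f_t$ really is $f_1+\mathrm{Bin}(t-1,\rho)$, and it concentrates above $n/3$ after about $2\varepsilon n/\rho$ reconfigurations. The theorem allows $T$ to exceed a small constant fraction of $n$, since $T$ is an arbitrary polynomial in $\lambda$ and only $n>\lambda^{c_0}$ is assumed. So no argument from i) alone can close the case $t<n$.

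The paper's proof gets past this only by treating every genesis member as independently Byzantine with probability $\rho$. It sets $\mathbb{E}[f_0]=n\rho$ and assigns probability $\rho$ to the evicted genesis member being Byzantine. That is an extra assumption on $\mathcal{C}_1$, not a consequence of i). You need to state such an assumption explicitly, either:
\begin{itemize}
\item the random model the paper uses implicitly, or
\item a deterministic condition such as: for every $m$, the $m$ most junior members of $\mathcal{C}_1$ include at most $(1/3-\varepsilon)m$ Byzantine ones.
\end{itemize}
Under the second condition the mean of your bound is at most $(1/3-\varepsilon)(n-t+1)+(t-1)\rho<(1/3-\varepsilon)n$. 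Hoeffding with deviation $\varepsilon n$ then gives $e^{-2\varepsilon^2n^2/(t-1)}\le e^{-2\varepsilon^2 n}$, as you intended.

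\textbf{Smaller points.} The first two are shared with the paper.
\begin{itemize}
\item In Step 1, $\lambda>\log K$ only gives $K<e^{\lambda}$. The resulting bound $TK\,\poly(\lambda)e^{-c\lambda}<T\,\poly(\lambda)e^{(1-c)\lambda}$ is useless unless $c>1$. You need $K=\poly(\lambda)$, as in the corollary where $K=R_A/R_H$, or $\lambda\ge C\log K$ with $C>1/c$. The paper's step $1-KnTe^{-\lambda}>1-e^{-c\lambda}$ has the same problem.
\item In Step 4, with $n>\lambda^{c_0}$ and $c_0<1$, the Chernoff term gives only $Te^{-2\varepsilon^2 n}\le e^{-c'(\varepsilon)\lambda^{c_0}}$. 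So what is provable is a rate $e^{-c(\varepsilon)\lambda^{c_0}}$, which is also where the paper's proof actually ends. Say that, rather than hoping to recover $e^{-c(\varepsilon)\lambda}$ by adjusting constants.
\item Your coupling is a cleaner device than the paper's conditioning on $\bigcap_{i<t}\{f_i<n/3\}$, after which the paper simply asserts independence. But it must be a domination coupling: the actual Byzantine indicators are bounded above by i.i.d.\ Bernoulli$(\rho)$ variables, not equal to them. The per-round probability is only at most $\rho$, and it can depend on the past, because Byzantine provers may deliberately fail CVPV and the adversary's holdings can change over time.
\end{itemize}
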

	\begin{proof}
		
		The assumption on $\lambda$ implies that (for any adversary model chosen and large enough $\lambda$, see \cref{sec:pv} for details), the success probability of an adversary breaking a single instance of CVPV is at most $e^{-c'\lambda}$ for some absolute constant $c'$. At each reconfiguration step, each committee member runs CVPV with each of the at most $K$ candidates (since the while loop in \Cref{step:cand_elec} of \Cref{alg:hcqpv} terminates after at most $K$ steps). Therefore, by a union bound, the probability that security of CVPV holds for all $1 \leq t \leq T$ is at least 
        \begin{equation}
            1-KnTe^{-\lambda}>1-e^{-c\lambda}
        \end{equation}
        for some $c,c'$ and sufficiently large $\lambda$. The randomness here is internal to CVPV and independent of any other randomness in the protocol. We henceforth condition on the event $\mathcal{E}'$ that CVPV is secure for all $T$ reconfigurations.
		
		When sampling of an eligible candidate location from $\mathcal{E}.\textsf{keys}$ (Step \ref{step:sample_cand} of \Cref{alg:hcqpv}), the probability that a Byzantine node is sampled at reconfiguration step $i$ is at most $\rho$, and these events are independent between rounds (since the sampling is independent between rounds, and by our assumption on the resource distribution, honest nodes holding a fraction at least $1 - \rho$ of the quantum computers will also be registered in $\mathcal{E}$). We will argue that as a consequence, as long as at least $2/3$ of the committee members at step $i$ are honest, the probability that $f_{i+1} \geq n/3$ is negligible in $\lambda$. 
		
		Consider the event in which a committee at any reconfiguration step $t < T$ is \textit{unsafe}, meaning $f_t \geq n/3$. We can bound the probability of this event by partitioning the corresponding sample space based on the first time step at which this condition was satisfied. Since these events are all disjoint and their union is simply the event that an unsafe committee was formed at some time, this is a valid partition. This gives
		\begin{equation} \label{eq:sum_unsafe}
			\begin{aligned} & \mathbb{P}\left[\exists t\leq T:f_{t}\ge n/3|\mathcal{E}'\right]\\
				= & \underset{t=1}{\overset{T}{\sum}}\mathbb{P}\left[\left.f_{t}\ge n/3\cap\underset{i=1}{\overset{t-1}{\bigcap}}f_{i}<n/3\right|\mathcal{E}'\right]\\
				= & \underset{t=1}{\overset{T}{\sum}}\mathbb{P}\left[f_{t}\ge n/3\left|\mathcal{E}'\cap\underset{i=1}{\overset{t-1}{\bigcap}}f_{i}<n/3\right.\right]\mathbb{P}\left[\left.\underset{i=1}{\overset{t-1}{\bigcap}}f_{i}<n/3\right|\mathcal{E}'\right]\\
				\leq & \underset{t=1}{\overset{T}{\sum}}\mathbb{P}\left[f_{t}\ge n/3\left|\mathcal{E}'\cap\underset{i=1}{\overset{t-1}{\bigcap}}f_{i}<n/3\right.\right]
			\end{aligned}
		\end{equation}
		
		Consider a single term in the above sum at round $t$. In rounds $t-n, \dots, t-1$, a Byzantine node is added to the committee iff it is sampled from $\mathcal{E}.\textsf{keys}$ (since the election process succeeds deterministically conditioned on $\mathcal{E}'$ and all committees were safe up to $t$). Then, due to independence of sampling between rounds, $f_t$ is a random variable equal to a sum of $n$ independent Bernoulli variables, each with parameter (at most) $\rho$. We then have 
		\begin{equation}
			\begin{aligned} & f_{t} & = & f_{t-1}-\mathbbm{1}\left[\text{Winner at }t-n\text{ was Byzantine}\right]\\
				&  &  & +\mathbbm{1}\left[\text{Winner at }t\text{ is Byzantine}\right].\\
				& \mathbb{E}\left[f_{t}\right] & = & \mathbb{E}\left[f_{t-1}\right]-\mathbb{P}\left[\text{Winner at }t-n\text{ was Byzantine}\right]\\
				&  &  & +\mathbb{P}\left[\text{Winner at }t\text{ is Byzantine}\right]\\
				&  & = & \mathbb{E}\left[f_{t-1}\right]-\rho+\rho\\
				&  & = & \mathbb{E}\left[f_{t-1}\right]\\
				&  & = & \mathbb{E}\left[f_{0}\right]\\
				&  & = & n\rho.
			\end{aligned}
		\end{equation}
		In the above, if $t-n<1$ we interpret ``$\text{Winner at }t-n\text{ was Byzantine}$" to mean ``the $t$-th member of the Genesis committee is Byzantine''. It then follows from Chernoff's inequality that
		\begin{equation}
			\mathbb{P}\left[f_{t}\geq(1+\delta)n\rho\left|\mathcal{E}'\cap\underset{i=1}{\overset{t-1}{\bigcap}}f_{i}<n/3\right.\right]\leq e^{-\delta n\rho\ln(1+\delta)/2}.
		\end{equation}
		Picking $\delta = 1/(3\rho)-1$ and using $\rho < 1/3-\varepsilon$ gives 
		\begin{equation}
			\begin{aligned} & \mathbb{P}\left[f_{t}\geq n/3\left|\mathcal{E}'\cap\underset{i=1}{\overset{t-1}{\bigcap}}f_{i}<n/3\right.\right] & \leq & e^{n(1-3\rho)\ln(3\rho)/6}\\
				&  & \leq & e^{-c(\varepsilon)n}
			\end{aligned}
		\end{equation}
		for a constant $c(\varepsilon)$ that is independent of $n$.
		Plugging this into \cref{eq:sum_unsafe} gives
		\begin{equation}
\begin{aligned} & \mathbb{P}\left[\exists t\leq T:f_{t}\ge n/3|\mathcal{E}'\right] & \leq & Te^{-c(\varepsilon)n}\\
 &  & \leq & Te^{-c(\varepsilon)\lambda^{c_0}}\\
 &  & \leq & e^{-c'(\varepsilon)\lambda^{c_0}}
\end{aligned}
		\end{equation}
		for appropriately chosen $c'(\varepsilon)$ and large enough $\lambda$, where we used $T=\poly(\lambda)$ and $n > \lambda^{c_0}$. Combining this with the bound on $\mathcal{E}'$ not holding that was shown earlier gives
		\begin{equation}
			\begin{aligned} & \mathbb{P}\left[\exists t\leq T:f_{t}\ge n/3\right] & \leq & \mathbb{P}\left[\exists t\leq T:f_{t}\ge n/3|\mathcal{E}'\right]+\mathbb{P}\left[\mathcal{E}'^{c}\right]\\
				&  & \leq & e^{-c(\varepsilon)\lambda^{c_0}}+e^{-c'\lambda}\\
				&  & \leq & e^{-c''(\varepsilon)\lambda^{c_0}}
			\end{aligned}
		\end{equation}
		for appropriate constants and sufficiently large $\lambda$.
	\end{proof}
	
	Note that in this result, the effective amount of resource held by honest players does not depend on $\Delta$. This is is in contrast to PoW-based protocols~\cite{Sompolinsky2015-bk} and others based on longest-chain consensus~\cite{Dembo2020-vd}. The security of the protocol will still depend on computation time, and arguably the spatial resolution of the protocol is somewhat analogous to the message delay and will degrade security, but propagation delays in themselves do not impact security (and are in fact essential for security of position verification). 
	%One may wonder whether reconfiguring the committee under the assumption that the distribution of quantum computers is fixed is unnecessary, and one may fix it once and for all. The latter strategy however is effectively equivalent to operating in the permissioned setting, and the committee members will be exposed to DoS attacks since their identity is fixed and known to all. This risk is mitigated by reconfiguring the committee periodically.  
	
	% Denote by $R_F(\kappa)$ a bound on the the factoring rate of the entire network (the number of $\kappa$-bit biprimes that can be factored between the publishing of the biprimes and the closing of the registration window in \Cref{alg:pq_reg_sm}). This will depend on the capabilities of the hardware used, but can also be made small by picking $\kappa$ suitably large, since the complexity of Shor's algorithm is $\tilde{O}(\kappa^2)$ and while Regev's recently introduced algorithm has complexity $\tilde{O}(\kappa^{3/2})$~\cite{Regev2025-ho}, it requires $\sqrt{\kappa}$ repetitions.
	
	\begin{theorem} \label{thm:liveness}
		\Cref{alg:hcqpv} satisfies liveness, with spam resistance (\Cref{prop:spam_resistance}) in the Random Oracle model. Assume the registration time $\tau_{\text{register}}$ is chosen so that an honest party can complete the DLOG puzzle with security parameter $\kappa$.
	\end{theorem}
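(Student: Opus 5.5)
The proof splits into two parts: liveness of \Cref{alg:hcqpv}, which reduces to the safety results already established, and the three items of \Cref{prop:spam_resistance}, which are proved directly in the Random Oracle model under the DLP assumption in $\mathcal{QR}_p$.

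\textbf{Liveness.} We condition on the high-probability event of \Cref{thm:safery_committee} that $f_t<n/3$ for every $t\leq T$, together with the event $\mathcal{E}'$ that CVPV is secure in every reconfiguration. On this event, the Safety and Liveness theorem of Solida shows that every $\textsf{SteadyState}$ phase, with $\textsf{ViewChange}$ as needed, confirms any valid pending transaction within $O(n\Delta)$ time. This holds because at most $f<n/3$ consecutive leaders can be Byzantine before an honest leader proposes.

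It remains to show that $\textsf{Reconfiguration}$ always terminates in polynomial time, so that the \texttt{else} branch is reached again.

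\begin{itemize}
\item The inner while loop of \Cref{alg:hcqpv} removes one key per iteration, so it runs at most $K$ times.
\item Each call to \Cref{alg:pv_comm} runs at most $|\textsf{pks}|\cdot n$ CVPV instances, each of duration $\tau_v$, plus one Byzantine Agreement per key. The Byzantine Agreement terminates in $\poly(n,\Delta)$ time because the committee is safe.
\item The total cost is therefore $\poly(n,\Delta,K)$ per reconfiguration.
\item Since reconfigurations occur only every $\tau_{\text{reconfig}}$ slots, a transaction received at time $t$ is confirmed by time $t+\poly(n,\Delta,K)$.
\end{itemize}

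Consistency of confirmed outputs across honest nodes follows from the same Solida theorem. The published quorum of $2n/3$ identical $v_j$ values lets outside parties track the committee.

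\textbf{Spam resistance.}

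\emph{Item i)} Validity is checked locally and classically. A verifier recomputes $H(\textsf{pk}|\textsf{pos}|r)^2 \bmod p$ and compares it with $g^x \bmod p$ using one hash evaluation and modular exponentiation. Signatures are checked with the public key.

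\emph{Item iii)} We use that in the Random Oracle model $H(\textsf{pk}|\textsf{pos}|r)$ is uniform. Its square is then a uniformly random element of $\mathcal{QR}_p$, which lies in the cyclic group generated by $g$, so a discrete logarithm exists. Shor's algorithm computes it in $\poly(\kappa)$ quantum time, which fits within $\tau_{\text{register}}$ by hypothesis.

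\emph{Item ii)} This is the main obstacle. We argue by reduction: a classical PPT adversary $\mathcal{A}$ that outputs a valid $(\textsf{pk},\textsf{pos},x)$ with non-negligible probability yields a classical DLP solver.

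The beacon string $r$ is fresh, so $\mathcal{A}$ has no useful oracle queries made before $r$ was announced. Let $q_H$ be the number of oracle queries on inputs ending in $r$. Given a challenge $h=t_0^2$, the reduction proceeds as follows.

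\begin{itemize}
\item It guesses an index $i^*\in[q_H]$ and programs the $i^*$-th such query to return $t_0$, or $t_0\cdot s$ for a random $s$ whose square has known logarithm, to be divided out afterwards.
\item It answers every other query with a fresh uniform value.
\item Programmed answers are identically distributed to true random oracle outputs, since squaring a uniform element of $\mathbb{Z}_p^*$ gives a uniform element of $\mathcal{QR}_p$.
\end{itemize}

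The reduction then succeeds with probability at least $\mathrm{adv}/q_H - 1/|\mathcal{QR}_p|$, where the second term accounts for $\mathcal{A}$ guessing an oracle value it never queried. This is non-negligible, contradicting the DLP assumption in $\mathcal{QR}_p$.

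One subtlety needs care. $H$ outputs values that are interpreted in $\mathbb{Z}_p^*$, so we must fix an encoding that makes the squared hash uniform on $\mathcal{QR}_p$, or statistically close to uniform. With that encoding the embedding above is sound.

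Finally, the cost bound $R_A(\lambda)$ shows that even a quantum adversary can register at most $R_A\tau_{\text{register}}$ positions. This justifies treating $K$ as polynomially bounded in the liveness bound.
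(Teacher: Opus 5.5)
Your proposal is correct and follows the same decomposition as the paper's proof. Liveness of $\textsf{SteadyState}$/$\textsf{ViewChange}$ comes from Solida's PBFT guarantee, each reconfiguration terminates in time polynomial in $K$ (with $K$ bounded by the adversary's quantum rate and $\tau_{\text{register}}$), completeness comes from Shor's algorithm, and soundness comes from classical hardness of DLP in $\mathcal{QR}_p$. The main difference is that you make soundness rigorous with an explicit random-oracle programming reduction (guess the query index, embed the challenge), where the paper simply asserts that DLP hardness rules out classical solvers; note that the reduction is given $h$ rather than $t_0$, so it must first compute a uniformly random square root of $h$, which is efficient here ($\pm h^{(p+1)/4}$ with a random sign, since $p\equiv 3 \pmod 4$ for a safe prime $p>5$).
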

	
	\begin{proof}
		Any transaction submitted during $\textsf{SteadyState}$ or $\textsf{ViewChange}$ is guaranteed to be processed by the liveness guarantee of the PBFT protocol used by Solida%\onote{I think this should be " by the properties of the Solida protocol".}
		. It remains to check liveness of registration (\Cref{alg:pq_reg_rom}). This is satisfied trivially in the vanilla version, albeit adversarial nodes may send multiple registration messages without possessing quantum computers at the corresponding locations. This will lead to overhead for the committee when attempting to verify these positions. 
		We next consider the spam-resistant version. We have the following:
		
		\textit{Completeness:} Since $H(\textsf{pk}|\textsf{pos}|r)$ is a $\poly(\lambda)$-bit integer, the complexity of solving the discrete logarithm using Shor's algorithm is $\poly(\lambda)$~\cite{Shor1994-hb}.  By Assumption \ref{ass:bounded_qcomp}, any honest party in possession of a quantum computer may publish a registration message. The adversary cannot prevent any honest party from submitting a registration message, and it will be processed by every other honest node.
		
		\textit{Soundness:} %\onote{I suggest to first give a formal or at least informal definition of what we man by spam resistant, before it is proved. Do we know of a good source for that?}\dar{yeah I was wondering that as well. I added a definition of this property, though I tried looking for something more formal in the literature but couldn't find anything.}
		By the hardness assumption of the discrete logarithm problemm, there is no classical algorithm for solving it in time $\poly(\lambda)$. Thus an adversary without a quantum computer cannot costlessly publish valid registration messages. 
		
		The size preliminary candidate set $\mathcal{E}.\textsf{keys}$, denoted by $K$, is also finite due to Assumption \ref{ass:bounded_qcomp} and the limited time $\tau_{\text{register}}$, and the runtime of each reconfiguration is $O(K)$. Every other stage in the protocol terminates in finite time, and every message arrives with delay at most $\Delta$. %\onote{There is no upper bound on the number of location registration requests. The time complexity grows linearly with the "cost" of factoring, but there is no "fixed time" for the running time of the steps during a reconfig. Recall that there was another variant with bounded round complexity (but still infinte message complexity), where the registration requests were handled in parallel, and also we checked multiple cnadidates in step 9 (where after each failure, we double the number of candidates we check), to avoid the fact that the round complexity is unbounded.}. \dar{The other variant with multiple rounds of checks didn't really solve the problem did it? It just made the number of rounds logarithmic in the number of candidates rather than linear, but if that number is unbounded it still doesn't fix the problem. Note that we assume bounded quantum computation (\cref{ass:bounded_qcomp}), so the list is finite. We could be more restrictive. }
	\end{proof}
	
	Combining these results, we have the following
	\begin{corollary}
		Assume
		\begin{enumerate}[i)]
			\item Sub-exponential quantum hardness of LWE. \label{ass:lwe}
			\item The adversary can pre-share at most $O(\lambda^k)$ Bell pairs for some known $k$. \label{ass:entanglement}
			\item 
			\begin{equation}
				f_1 < (1/3-\varepsilon)n.
			\end{equation}
			\item 
			\begin{equation}
				\rho < 1/3-\varepsilon.
			\end{equation}
            \item 
            \begin{equation}
                n > \lambda^c
            \end{equation}
            for some $0<c<1$.
		\end{enumerate}
		Then, for any $T=\poly(\lambda)$, \Cref{alg:hcqpv} satisfies consistency and liveness (\Cref{prop:consistency_liveness}), and guarantees Sybil resistance as defined in \Cref{claim:syb}. 
		
		Additionally, if %we denote by $R_H$ and $R_A$ the rate at which a single honest participant and the adversary can solve the DLOG puzzle in \Cref{step:dlog_poq} of \Cref{alg:pq_reg_rom}, 
		we assume 
		\begin{enumerate}[i)]
			\addtocounter{enumi}{5}
			\item The Random Oracle assumption.
			\item $\lambda > \log(R_A(\lambda)/R_H(\lambda))$, \label{ass:lambda_ineq}
		\end{enumerate}    
		then \Cref{alg:hcqpv} also satisfies spam-resistance (\Cref{prop:spam_resistance}).
	\end{corollary}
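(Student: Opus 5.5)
The corollary follows by assembling three earlier results: the Solida safety and liveness theorem, \Cref{thm:safery_committee}, and \Cref{thm:liveness}. The first step is to check that assumptions \ref{ass:lwe} and \ref{ass:entanglement} are exactly what Theorem 1.2 of \cite{Liu2022-ci} needs (see \Cref{tab:cvpv_results}). Together with Assumption \ref{ass:bounded_qcomp}, polynomially bounded adversaries are a subset of subexponential ones. So CVPV (Construction 5.10) is secure in the standard model against adversaries holding $O(\lambda^k)$ pre-shared Bell pairs, with success probability $e^{-c'\lambda^{\delta}}$ for some $\delta>0$ coming from the subexponential parameter choice. In the union bound over the $KnT$ CVPV instances inside the proof of \Cref{thm:safery_committee}, we replace $e^{-c'\lambda}$ by this quantity.

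Hypothesis iii) of \Cref{thm:safery_committee} requires $\lambda>\log K$. I would obtain it from the bounded-computation assumption. Honest parties plus the adversary can issue at most $\poly(\lambda)$ registrations in the window $\tau_{\text{register}}$, so $K=\poly(\lambda)$ and $\log K=O(\log\lambda)<\lambda$ for large $\lambda$. The remaining hypotheses i), ii), iv) are assumed directly. \Cref{thm:safery_committee} then says that, with probability $1-e^{-c(\varepsilon)\lambda^{c}}$, every committee up to $T$ has $f_t<n/3$.

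On that event, consistency and liveness of \Cref{prop:consistency_liveness} follow from the Solida safety/liveness theorem. Liveness additionally needs termination of reconfiguration, which is the first half of \Cref{thm:liveness}: the runtime is $O(K)$, so confirmation occurs within $\poly(n,\Delta,K)$.

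For Sybil resistance (\Cref{claim:syb}), I would reuse the inductive structure from the proof of \Cref{thm:safery_committee}. Conditioned on CVPV security and a safe committee, Byzantine agreement makes honest members agree. A sampled position passes exactly when a quantum computer is present in that cell, and cell availability rules out double counting. By the order-independence of uniform sampling without replacement, the first passing position is therefore uniform over occupied cells. This gives probability $\mu_i/\sum_k\mu_k$, and independence across rounds comes from the fresh beacon randomness.

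For spam resistance, \Cref{thm:liveness} already covers completeness and the ROM/DLOG soundness. Property i) is the classical check $g^x\equiv H(\cdot)^2$ plus signature verification. Assumption \ref{ass:lambda_ineq} is used to ensure the adversary's puzzle throughput is only a $2^{\lambda}$-bounded multiple of an honest party's. This keeps the candidate list, and hence $K$, polynomial.

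\textbf{Main obstacle.} The delicate point is reconciling the CVPV error rate with the union bound. Under subexponential LWE, the per-instance error may only be $e^{-\lambda^{\delta}}$. The union bound over $KnT=\poly(\lambda)$ instances must still be negligible, and $\lambda$ must be set large enough relative to the entanglement bound $k$. I would handle this by choosing the LWE security parameter as a suitable polynomial in $\lambda$, so that the stated failure probability $e^{-c\lambda^{c_0}}$ survives.
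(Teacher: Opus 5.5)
Your assembly matches the paper's: CVPV security from i)--ii), then \Cref{thm:safery_committee}, then the Solida theorem and \Cref{thm:liveness}. The real difference is how you discharge hypothesis iii) of \Cref{thm:safery_committee}.

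The paper sets $\tau_{\text{register}}=1/R_H(\lambda)$. This choice also discharges the hypothesis of \Cref{thm:liveness} that honest parties can finish the DLOG puzzle, a step you leave implicit and should state. The paper then bounds the number of registrations by $K=R_A(\lambda)/R_H(\lambda)$ and reads assumption vii) literally as $\lambda>\log K$.

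You instead get $K=\poly(\lambda)$ directly from Assumption~\ref{ass:bounded_qcomp}. This buys two things.
\begin{itemize}
\item It covers the plain registration version in the first half of the corollary. There no $R_A/R_H$ bound exists, and the paper's proof never checks $\lambda>\log K$ explicitly.
\item It is robust to your more cautious per-instance CVPV error $e^{-c'\lambda^{\delta}}$. For that error, $K<2^{\lambda}$ alone would not make $KnTe^{-c'\lambda^{\delta}}$ small.
\end{itemize}
The price is that vii) becomes redundant, consistent with the paper's own remark that it holds for large $\lambda$ since $R_A,R_H$ are polynomial. So drop your sentence that vii) ``keeps $K$ polynomial'': vii) only yields $K<2^{\lambda}$, and bounded computation is what carries that step.

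You also prove \Cref{claim:syb} explicitly: under uniform sampling without replacement, the first passing position is uniform over the passing set. The paper only gestures at this inside the proof of \Cref{thm:safery_committee}. Your claim that a position ``passes exactly when a computer is present'' needs the footnote's assumption that Byzantine holders also run CVPV. Otherwise they can abstain adaptively, and you only get $\geq$ for honest participants.

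Finally, the paper uses public verifiability of registration to argue that all honest parties hold the same $\mathcal{E}$. Your consistency argument rests only on the Solida theorem and should include this point, since the beacon sample is meaningful only relative to a common $\mathcal{E}$.
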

	\begin{proof}
		By assumptions \ref{ass:lwe} and \ref{ass:entanglement}, we can construct secure CVPV with security parameter $\lambda$.
		
		%For spam-resistance, choose the security parameter of DLOG to be $\kappa=n$, which guarantees negligible failure probability w.r.t. $n$. 
		Liveness of the spam-resistant version is guaranteed by picking 
        \begin{equation}
            \tau_\text{register} = 1/R_H(\lambda),
        \end{equation}
        so that honest nodes can register their positions. This implies the adversary can make at most $R_A(\lambda)/R_H(\lambda)$ registrations in this time period. Hence $K=R_A(\lambda)/R_H(\lambda)$ and Assumption \ref{ass:lambda_ineq} guarantees $\lambda > \log(K)$. Note that 
        \begin{equation}
            \lambda > \log(R_A(\lambda)/R_H(\lambda))
        \end{equation}
        can be satisfied for sufficiently large $\lambda$ since the problem is efficiently solvable using a quantum computer and hence $R_A,R_H$ are polynomials.
		
		Consistency of $\textsf{SteadyState}$ and $\textsf{ViewChange}$ follows from the safety of the committee at all times, which is guaranteed by \Cref{thm:safery_committee}. All steps in the registration process are publicly verifiable conditioned on the initial committee $\mathcal{C}_1$, and thus consistency of the registration process is also assured. Liveness follows from \Cref{thm:liveness}.
	\end{proof}
	
	We reiterate that the bounded pre-shared entanglement assumption can be replaced by any polynomial entanglement, at the price of moving to the QROM. Note that because of the mild (logarithmic) dependence on $R_A,R_H$, it is sufficient to use very loose bounds on these functions that need not be highly sensitive to implementation details. 
	
	\subsection{Notable properties} \label{sec:notable_properties}
	
	Having shown that the protocol satisfies safety and liveness, we now analyze various aspects and highlight advantages relative to existing classical protocols
	
	\paragraph{Security.}
	
	In hybrid protocols based on PoW like Solida, the time between reconfigurations is set by the difficulty of the PoW puzzle. This difficulty parameter and the security of the protocol is inextricably linked to the maximal message delay $\Delta$ (see e.g. \cite{Dembo2020-vd}). This is due to the stochastic nature of the PoW puzzle. Since our reconfiguration mechanism is completely different, there is no coupling between the reconfiguration frequency and the properties of the network or security. At the limit of the frequency going to zero, the protocol approaches a fully permissioned setting with a fixed committee. As it is increased, new participants will be able to participate with less delay, albeit at the cost of increased resource requirements. This frequency is also independent of the frequency of the PBFT protocol run by the committee at any given time. Because of this, the Random Oracle assumption is needed only as a spam mitigation measure, and not for security of the position verification-based Sybil resistance mechanism.
	
	\paragraph{Energy efficiency.} 
	
	The only steps in our algorithm requiring a quantum computer are \Cref{step:cvpv_proof} of \Cref{alg:pv_comm} and \Cref{step:dlog_poq} of \Cref{alg:pq_reg_rom}. Each honest party needs to operate their computer only once per registration and once in order to run CVPV as the prover and join the committee. This is in contrast to Proof-of-Work protocols that effectively require constant mining, and in particular the Bitcoin protocol~\cite{Nakamoto2008-bc}. 
	
	We are not aware of secure hybrid protocols not based on Proof-of-Work, and in particular secure in the Standard Model or Random Oracle Model. While Solana~\cite{Yakovenko2018-ot} is an energy-efficient hybrid protocol that combines Proof-of-Stake with Proof-of-History based on hashchains, the protocol is not fully known and does not come with security guarantees. Proo-of-History is meant to serve as a verifiable record of elapsed time, related to the formal notion of Verifiable Delay Functions~\cite{Boneh2018-jq, Boneh2024-tb} (and such hashchains are in fact referred to as ``pseudo-VDFs'' in \cite{Boneh2018-jq}). There is some evidence that there are issues with Proof-of-History as a Sybil resistance mechanism~\cite{Sliwinski2024-ie}, which could be related to the outages Solana has experienced over the years. VDFs are generally difficult to construct since they require an exponential gap between the time required for evaluation and verification, and parallelization of the former is hard to rule out \cite{Mahmoody2019-vq, Biryukov2024-ry}. In fact, it was recently shown that VDFs are not secure in the ROM, and hence must require more exotic assumptions~\cite{ZRW25}. %\onote{I didn't understand the focus on Solana, especially due to all its downsides. There are many other Proof-of-Stake cryptocurrencies, which are energy efficient, so why are we focusing on the worst protocol? Perhaps, they claim to be "fully permisionless"? I don't think so... Overall, I think we should claim that we are  permisionless *and* energy efficient in the sense that unlike proof-of-stake coins, an external resource is sufficient to be elected to the committee (unlike Proof-of-Stake), and at the same time, this external resource does not require high energetic costs.} \dar{Solana is a hybrid protocol that doesn't rely on PoW as the external resource, so arguably the closest to what we do. It is also way more energy efficient than Bitcoin for this reason. I think it would be good to mention it. Pure PoS protocols have the wealth concentration/censorship issues we avoid (as do other hybrid protocols).}
	
	\paragraph{Permissionlessness.}
	
	We note that the process of building a quantum computer and registering its position is fully permissionless, in the sense that this can be done independently of any other participant and the committee in particular. As long as the committee is safe, committee members also have no control over which position gets selected as the next committee member. As discussed earlier, this circumvents the wealth concentration issues faced by protocols based on Proof-of-Stake.
	The protocol we present also naturally handles participant inactivity. A participant can run CVPV only when their quantum computer at the sampled location is online. If this is not the case, the protocol proceeds to sampling a different location. 
	
	\section{Resource requirements} \label{sec:resource_requirements}
	
	While our protocol is infeasible to implement on today's hardware and our analysis is asymptotic, one can get a sense of the scale of runtimes and achievable spatial resolution from known resource estimates for related problems. 
	
	\subsection{Discrete Logarithm}
	
	Resource estimates from several years ago indicate that computing the discrete logarithm in our setting (i.e., the quadratic residues subgroup of $\mathbb Z_p^*$ for a safe-prime $p$) with 2048 bit requires $7$ hours on a noisy quantum computer with 26 million physical qubits ~\cite[Table 5]{Gidney2021-je}.
	We note that the security parameters required by a protocol like ours will most likely be significantly smaller than $2048$ bits. Various optimizations have reduced the space-time overhead of factoring by a factor of $20$~\cite{Gidney2025-dp}, and while some of these optimizations are tailored to factoring, others (such as T-state cultivation\cite{Gidney2024-ab}) reduce the cost of generic resource states and would likely imply improved estimates for the discrete logarithm problem as well.
	
	\subsection{Proofs of Quantumness}

    Position verification based on classical communication requires the prover to run proofs of quantumness. There are constructions of highly efficient proofs of quantumness that are not quantum-secure and hence unsuited for our purposes~\cite{Kahanamoku-Meyer2022-ta}. However, quantum-secure, efficient constructions in the ROM have also been developed~\cite{Brakerski2020-eo}, as well as ones in the SM requiring only constant-depth quantum circuits (requiring unbounded fanout gates)~\cite{Hirahara2021-ec}. The proposal of Ref. ~\cite{Brakerski2020-eo} is estimated to require $8\log^2(\lambda)\lambda$ qubits to provide a proof of quantumness with security parameter $\lambda$. Indeed, these proposals were used as the basis for demonstrations on ion-trap quantum computers (albeit not in a classically-hard regime)~\cite{Zhu2021-co, PhysRevA.109.012610}, showing that proof of quantumness are within reach of existing noisy devices. Concrete estimates in these works~\cite{PhysRevA.109.012610} claim that one needs $\approx 10^3$ qubits and $\approx 10^5$ depth circuits to construct proofs of quantumness secure against classical adversaries.%, though as in the case of the discrete logarithm problem the attacker in our protocol will be severely time constrained, hence more modest security parameters may suffice. For comparison, the current NIST recommendations for transient data such as session tokens, chat messages and certain IoT applications is $112$ bits of security~\cite{Barker2020-qb, National-Institute-of-Standards-and-Technology2023-ov}, which when using efficient schemes based on Elliptic Curve Cryptography~\cite{Miller1985-yv, Koblitz1987-cc} requires key sizes of only $256$ bits. 
	
	\subsection{Position verification}
	
	Quantum position verification has recently been experimentally demonstrated in one spatial dimension, achieving a resolution of about $50\text{m}$~\cite{Kavuri2026-ba}. However, the demonstration did not use classically verifiable position verification, and instead required quantum communication (as in the original formulation~\cite{Buhrman2014-qx}) which is difficult to perform over long distances and large bandwidths. It does serve as an indication that at least in principle, reasonable spatial resolutions can be achieved with this method. 
    
    CVPV will have greatly reduced resource requirements from the verifier. In one spatial dimension, the verifier requires two classical sources and receivers, suitably positioned, that can communicate using optical wireless communication with the prover. Existing classical telecommunications hardware could serve this purpose. The more stringent requirements are from the prover required to run a proof of quantumness. This is suitable for our protocol since we envisage this as the scarce economic resource conferring Sybil resistance.

	\section{Discussion} \label{sec:discussion}
	
	We have shown that the use of quantum resources in solving distributed consensus has the potential to substantially improve upon some of the shortcomings of classical protocols, in particular those related to energy consumption, censorship, and the assumptions needed for guaranteeing security. While perhaps not immediately implementable, our construction suggests that the uniquely fragile nature of quantum information, which has no classical analog, can be useful in building the economic scarcity that is indispensable for constructing Sybil-resistant decentralized protocols. 
	
	Our protocol serves as a proof of principle that incorporating tools and capabilities from quantum information theory with modern classical approaches to the consensus problem can lead to novel protocols with improved performance. Our work leads to several open questions and possible avenues for improvements:
	\begin{itemize}
		\item Can quantum Sybil resistance (whether based on position verification or otherwise) enable secure, fully permissionless consensus in the standard model? The best classical attempts require fine-grained complexity results~\cite{Ball2024-qd}. The non-transferability of position verification might make such a construction difficult to base on position verification, and indeed this is the reason for including a committee in our protocol, and the corresponding on-chain resource of committee membership. 
		\item Can spam mitigation be achieved in the standard model? This would require producing a puzzle that depends on inputs provided by each participant wishing to register (to prevent solutions from being copied), so might be difficult to achieve without additional interaction.
		\item Can the position of a participant remain hidden until being elected to the committee? This is reminiscent of recent results on zero-knowledge position verification \cite{Girish2026-so}, in which a prover can prove some zero-knowledge statement about their position without revealing it. The issue with using such results is that there appears to be a fundamental assumption of an honest verifier in this setting. The verifier effectively runs position verification with all positions at all times, since if they only ran it with a single prover position the prover would inevitably fail and this position could be ruled out. Since we must accommodate byzantine committee members in our protocol, there is no clear way to work around this. 
		\item Existing CVPV results apply to a single spatial dimension. It would be interesting to extend these to more realistic 2 or 3-dimensional settings, and account for finite computation time.
		\item One of the main concerns with our protocol is the time required to run LWE-based proofs of quantumness, given the time-sensitivity of position verification. It would be of great interest to understand better whether quantum-secure Proof-of-Quantumness could be made more efficient. It is known that other proofs of quantumness can be made very efficient and small instances can even be implemented on quantum computers today~\cite{Kahanamoku-Meyer2022-ta, Lewis2024-mg}. 
		
		In this context, it is worth noting that position verification is also possible using quantum communication, which significantly reduces the quantum computational overhead of the prover~\cite{Unruh2014-eb, Kavuri2026-ba}. This is somewhat at odds with the goal of using the quantum computers serving as provers as a costly resource to ensure Sybil resistance, since it instead shifts the burden onto the verifiers, but suggests there could be ways to make such schemes more efficient if needed.
		
		\item We make a randomness beacon assumption, yet quantum devices can also produce certified randomness (or min-entropy) \cite{Brakerski2021-bg, Amos2020-ls, Shmueli2025-vi}. It would be of great interest if such protocols could be adapted to generate randomness in a decentralized setting, which may be useful for other consensus protocols as well. Very recently, it was shown that quantum computers can be used to generate transferable certified min-entropy using only classical communication (though verification requires a quantum computer)~\cite{Casper2026-jx}. If this result could be strengthened from min-entropy to uniform randomness it could potentially be used to remove the random beacon assumption (at least among participants with quantum computers).
	\end{itemize}
	
	Given the extensive developments in classical consensus in recent years, and the natural suitability of various quantum capabilities and primitives to the requirements of consensus, we believe there could be other fruitful applications of quantum tools to these problems, which will become increasingly relevant as more powerful quantum computers come online. Thus, while on the one hand posing a security risk to existing classical protocols, quantum computers may unlock new possibilities in this space.

    \subsection*{Acknowledgements}

    The authors would like to thank Yuval Efron and Scott Aaronson for helpful discussions.

    S.J. was partially supported by an Amazon AI Fellowship. O.S. was supported by the Israel Science Foundation (grant No. 2527/24).
    O.S. was funded by the European Union (ERC-2022-COG, ACQUA, 101087742). Views and opinions expressed are however those of the author(s) only and do not necessarily reflect those of the European Union or the European Research Council Executive Agency. Neither the European Union nor the granting authority can be held responsible for them.
	\newpage
	%\bibliographystyle{unsrt}
	%\bibliography{paperpile, refs}
    \printbibliography

@MISC{Shoup2022-vz,
title = "Proof of history: What is it good for?",
author = "Shoup, Victor",
year =  2022
}

@ARTICLE{Dolev1983-yg,
title = "Authenticated algorithms for Byzantine agreement",
author = "Dolev, Danny and Strong, H Raymond",
journal = "SIAM Journal on Computing",
publisher = "SIAM",
volume =  12,
number =  4,
pages = "656--666",
year =  1983
}

@MISC{Buterin2022-yl,
title = "What in the Ethereum application ecosystem excites me",
author = "Buterin, Vitalik",
year =  2022,
howpublished = "\url{https://vitalik.eth.limo/general/2022/12/05/excited.html}"
}

@ARTICLE{Andrychowicz2014-qt,
title = "Distributed cryptography based on the proofs of work",
author = "Andrychowicz, Marcin and Dziembowski, Stefan",
journal = "Cryptology ePrint Archive",
year =  2014
}

@ARTICLE{Regev2025-ho,
title = "An efficient quantum factoring algorithm",
author = "Regev, Oded",
journal = "J. ACM",
publisher = "Association for Computing Machinery (ACM)",
volume =  72,
number =  1,
pages = "1--13",
year =  2025,
language = "en"
}

@ARTICLE{Shi2020-aw,
title = "Foundations of distributed consensus and blockchains",
author = "Shi, Elaine",
journal = "Book manuscript",
year =  2020
}

@MISC{Gidney2025-dp,
title = "How to factor 2048 bit {RSA} integers with less than a million noisy qubits",
author = "Gidney, Craig",
year =  2025,
primaryClass = "quant-ph",
eprint = "2505.15917"
}

@INPROCEEDINGS{Boneh2018-jq,
title = "Verifiable delay functions",
author = "Boneh, Dan and Bonneau, Joseph and B{\"{u}}nz, Benedikt and Fisch, Ben",
booktitle = "Annual international cryptology conference",
institution = "Springer",
pages = "757--788",
year =  2018
}

@MISC{Kavuri2026-ba,
title = "Quantum position verification with remote untrusted devices",
author = "Kavuri, Gautam A and Zhang, Yanbao and Gookin, Abigail R and Patra, Soumyadip and Bienfang, Joshua C and Fu, Honghao and Alnawakhtha, Yusuf and Reddy, Dileep V and Mazurek, Michael D and Abell\'{a}n, Carlos and Amaya, Waldimar and Mitchell, Morgan W and Nam, Sae Woo and Miller, Carl A and Mirin, Richard P and Stevens, Martin J and Glancy, Scott and Knill, Emanuel and Shalm, Lynden K",
year =  2026,
primaryClass = "quant-ph",
eprint = "2601.16892"
}

@ARTICLE{Gidney2021-je,
title = "How to factor 2048 bit {RSA} integers in 8 hours using 20 million noisy qubits",
author = "Gidney, Craig and Eker\aa{}, Martin",
journal = "Quantum",
publisher = "Verein zur F{\"{o}}rderung des Open Access Publizierens in den Quantenwissenschaften",
volume =  5,
pages =  433,
year =  2021,
eprint = "1905.09749v3"
}

@ARTICLE{Jelasity2007-zy,
title = "Gossip-based peer sampling",
author = "Jelasity, M\'{a}rk and Voulgaris, Spyros and Guerraoui, Rachid and Kermarrec, Anne-Marie and van Steen, Maarten",
journal = "ACM Trans. Comput. Syst.",
publisher = "Association for Computing Machinery (ACM)",
volume =  25,
number =  3,
pages =  8,
year =  2007,
language = "en"
}

@INPROCEEDINGS{Demers1987-rr,
title = "Epidemic algorithms for replicated database maintenance",
author = "Demers, Alan and Greene, Dan and Hauser, Carl and Irish, Wes and Larson, John",
booktitle = "Proceedings of the sixth annual ACM Symposium on Principles of distributed computing - PODC '87",
publisher = "ACM Press",
address = "New York, New York, USA",
year =  1987,
language = "en"
}

@INCOLLECTION{Daian2019-ky,
title = "Snow white: Robustly reconfigurable consensus and applications to provably secure proof of stake",
author = "Daian, Phil and Pass, Rafael and Shi, Elaine",
booktitle = "Financial Cryptography and Data Security",
publisher = "Springer International Publishing",
address = "Cham",
pages = "23--41",
series = "Lecture Notes in Computer Science",
year =  2019,
language = "en"
}

@INPROCEEDINGS{Sliwinski2024-ie,
title = "Halting the Solana blockchain with epsilon stake",
author = "Sliwinski, Jakub and Kniep, Quentin and Wattenhofer, Roger and Schaich, Fabian",
booktitle = "Proceedings of the 25th International Conference on Distributed Computing and Networking",
publisher = "ACM",
address = "New York, NY, USA",
pages = "45--54",
year =  2024,
language = "en"
}

@TECHREPORT{Yakovenko2018-ot,
title = "Solana: A new architecture for a high performance blockchain {v0}.8.13",
author = "Yakovenko, Anatoly",
institution = "Solana Labs",
year =  2018
}

@INPROCEEDINGS{Biryukov2024-ry,
title = "Cryptanalysis of algebraic verifiable delay functions",
author = "Biryukov, Alex and Fisch, Ben and Herold, Gottfried and Khovratovich, Dmitry and Leurent, Ga{\"{e}}tan and Naya-Plasencia, Mar\'{\i}a and Wesolowski, Benjamin",
booktitle = "Annual International Cryptology Conference",
institution = "Springer",
pages = "457--490",
year =  2024
}

@ARTICLE{Mahmoody2019-vq,
title = "Can Verifiable Delay Functions be Based on Random Oracles?",
author = "Mahmoody, Mohammad and Smith, Caleb and Wu, David J",
journal = "Cryptology ePrint Archive",
year =  2019
}

@ARTICLE{Boneh2024-tb,
title = "A survey of two verifiable delay functions using proof of exponentiation",
author = "Boneh, Dan and B{\"{u}}nz, Benedikt and Fisch, Ben",
journal = "IACR Communications in Cryptology",
volume =  1,
number =  1,
year =  2024
}

@INPROCEEDINGS{Zhang2020-jx,
title = "Evaluation of energy consumption in block-chains with proof of work and proof of stake",
author = "Zhang, Rong and Chan, Wai Kin",
booktitle = "Journal of physics: Conference series",
institution = "IOP Publishing",
volume =  1584,
pages =  012023,
year =  2020
}

@ARTICLE{Sedlmeir2020-qi,
title = "The energy consumption of blockchain technology: Beyond myth",
author = "Sedlmeir, Johannes and Buhl, Hans Ulrich and Fridgen, Gilbert and Keller, Robert",
journal = "Business and Information Systems Engineering",
publisher = "Springer Science and Business Media LLC",
volume =  62,
pages = "599--608",
year =  2020
}

@INPROCEEDINGS{Fanti2019-nj,
title = "Compounding of wealth in proof-of-stake cryptocurrencies",
author = "Fanti, Giulia and Kogan, Leonid and Oh, Sewoong and Ruan, Kathleen and Viswanath, Pramod and Wang, Gerui",
booktitle = "International conference on financial cryptography and data security",
institution = "Springer",
pages = "42--61",
year =  2019
}

@INPROCEEDINGS{Boneh2011-dm,
title = "Random oracles in a quantum world",
author = "Boneh, Dan and Dagdelen, {\"{O}}zg{\"{u}}r and Fischlin, Marc and Lehmann, Anja and Schaffner, Christian and Zhandry, Mark",
booktitle = "International conference on the theory and application of cryptology and information security",
institution = "Springer",
pages = "41--69",
year =  2011
}

@ARTICLE{Buterin2014-nr,
title = "A next-generation smart contract and decentralized application platform",
author = "Buterin, Vitalik and {Others}",
journal = "white paper",
volume =  3,
number =  37,
pages = "2--1",
year =  2014
}

@INPROCEEDINGS{Kiayias2017-is,
title = "Ouroboros: A Provably Secure Proof-of-Stake Blockchain Protocol",
author = "Kiayias, Aggelos and Russell, Alexander and David, Bernardo and Oliynykov, Roman",
booktitle = "Annual International Cryptology Conference (CRYPTO)",
institution = "Springer",
pages = "357--388",
year =  2017
}

@MISC{Girish2026-so,
title = "Private proofs of when and where",
author = "Girish, Uma and Gluch, Greg and Goldwasser, Shafi and Malkin, Tal and Orshansky, Leo and Yuen, Henry",
year =  2026,
primaryClass = "quant-ph",
eprint = "2601.18961"
}

@MISC{Bluhm2021-rr,
title = "A single-qubit position verification protocol that is secure against multi-qubit attacks",
author = "Bluhm, Andreas and Christandl, Matthias and Speelman, Florian",
year =  2021,
primaryClass = "quant-ph",
eprint = "2104.06301"
}

@INPROCEEDINGS{Castro1999-tp,
title = "Practical Byzantine fault tolerance",
author = "Castro, Miguel and Liskov, Barbara",
booktitle = "Proceedings of the third symposium on Operating systems design and implementation",
publisher = "USENIX Association",
address = "USA",
pages = "173--186",
series = "OSDI '99",
year =  1999,
language = "en"
}

@MISC{Zhu2021-co,
title = "Interactive protocols for classically-verifiable quantum advantage",
author = "Zhu, Daiwei and Kahanamoku-Meyer, Gregory D and Lewis, Laura and Noel, Crystal and Katz, Or and Harraz, Bahaa and Wang, Qingfeng and Risinger, Andrew and Feng, Lei and Biswas, Debopriyo and Egan, Laird and Gheorghiu, Alexandru and Nam, Yunseong and Vidick, Thomas and Vazirani, Umesh and Yao, Norman Y and Cetina, Marko and Monroe, Christopher",
year =  2021,
primaryClass = "quant-ph",
eprint = "2112.05156"
}

@MISC{Hirahara2021-ec,
title = "Test of quantumness with small-depth quantum circuits",
author = "Hirahara, Shuichi and Gall, Fran\c{c}ois Le",
year =  2021,
primaryClass = "quant-ph",
eprint = "2105.05500"
}

@MISC{Brakerski2020-eo,
title = "Simpler Proofs of Quantumness",
author = "Brakerski, Zvika and Koppula, Venkata and Vazirani, Umesh and Vidick, Thomas",
year =  2020,
primaryClass = "quant-ph",
eprint = "2005.04826"
}

@MISC{Gavinsky2025-kb,
title = "Anonymous quantum tokens with classical verification",
author = "Gavinsky, Dmytro and Gilboa, Dar and Jain, Siddhartha and Maslov, Dmitri and McClean, Jarrod R",
year =  2025,
primaryClass = "quant-ph",
eprint = "2510.06212"
}

@INPROCEEDINGS{Roughgarden2024-wn,
title = "The computer in the sky (keynote)",
author = "Roughgarden, Tim",
booktitle = "Proceedings of the 56th Annual ACM Symposium on Theory of Computing",
publisher = "ACM",
address = "New York, NY, USA",
pages = "1--1",
year =  2024,
language = "en"
}

@INCOLLECTION{Sompolinsky2015-bk,
title = "Secure high-rate transaction processing in bitcoin",
author = "Sompolinsky, Yonatan and Zohar, Aviv",
booktitle = "Financial Cryptography and Data Security",
publisher = "Springer Berlin Heidelberg",
address = "Berlin, Heidelberg",
pages = "507--527",
series = "Lecture Notes in Computer Science",
year =  2015,
language = "en"
}

@ARTICLE{Shmueli2025-oh,
title = "Unclonable Cryptography in Linear Quantum Memory",
author = "Shmueli, Omri and Zhandry, Mark",
journal = "Cryptology ePrint Archive",
year =  2025
}

@ARTICLE{Casper2026-jx,
title = "Publicly Certifiable Min-Entropy Without Quantum Communication",
author = "Casper, Ofer and Nehoran, Barak and Sattath, Or",
journal = "Cryptology ePrint Archive",
year =  2026
}

@ARTICLE{Buhrman2014-qx,
title = "Position-based quantum cryptography: Impossibility and constructions",
author = "Buhrman, Harry and Chandran, Nishanth and Fehr, Serge and Gelles, Ran and Goyal, Vipul and Ostrovsky, Rafail and Schaffner, Christian",
journal = "SIAM J. Comput.",
publisher = "Society for Industrial \& Applied Mathematics (SIAM)",
volume =  43,
number =  1,
pages = "150--178",
year =  2014,
language = "en"
}

@ARTICLE{Babbush2021-nu,
title = "Focus beyond quadratic speedups for error-corrected quantum advantage",
author = "Babbush, Ryan and McClean, Jarrod R and Newman, Michael and Gidney, Craig and Boixo, Sergio and Neven, Hartmut",
journal = "PRX quantum",
publisher = "American Physical Society (APS)",
volume =  2,
number =  1,
pages =  1,
year =  2021,
language = "en"
}

@INPROCEEDINGS{Amos2020-ls,
title = "One-shot signatures and applications to hybrid quantum/classical authentication",
author = "Amos, Ryan and Georgiou, Marios and Kiayias, Aggelos and Zhandry, Mark",
booktitle = "Proceedings of the 52nd Annual ACM SIGACT Symposium on Theory of Computing",
publisher = "ACM",
address = "New York, NY, USA",
year =  2020,
language = "en"
}

@INPROCEEDINGS{Shor1994-hb,
title = "Algorithms for quantum computation: discrete logarithms and factoring",
author = "Shor, P W",
booktitle = "Proceedings 35th Annual Symposium on Foundations of Computer Science",
publisher = "IEEE Comput. Soc. Press",
year =  1994
}

@MISC{Bonneau2022-di,
title = "Public randomness and randomness beacons",
author = "Bonneau, Joseph and Nikolaenko, Valeria",
booktitle = "a16z crypto",
year =  2022,
howpublished = "\url{https://a16zcrypto.com/posts/article/public-randomness-and-randomness-beacons/}",
language = "en"
}

@INPROCEEDINGS{Dziembowski2015-cu,
title = "Proofs of space",
author = "Dziembowski, Stefan and Faust, Sebastian and Kolmogorov, Vladimir and Pietrzak, Krzysztof",
booktitle = "Annual Cryptology Conference",
institution = "Springer",
pages = "585--605",
year =  2015
}

@ARTICLE{Lauk2020-du,
title = "Perspectives on quantum transduction",
author = "Lauk, Nikolai and Sinclair, Neil and Barzanjeh, Shabir and Covey, Jacob P and Saffman, Mark and Spiropulu, Maria and Simon, Christoph",
journal = "Quantum Sci. Technol.",
publisher = "IOP Publishing",
volume =  5,
number =  2,
pages =  020501,
year =  2020
}

@INCOLLECTION{Aaronson2009-uz,
title = "Quantum copy-protection and quantum money",
author = "Aaronson, Scott",
booktitle = "24th Annual IEEE Conference on Computational Complexity",
publisher = "IEEE Computer Soc., Los Alamitos, CA",
pages = "229--242",
year =  2009
}

@MISC{Abraham2016-ii,
title = "Solida: A Blockchain Protocol Based on Reconfigurable Byzantine Consensus",
author = "Abraham, Ittai and Malkhi, Dahlia and Nayak, Kartik and Ren, Ling and Spiegelman, Alexander",
year =  2016,
primaryClass = "cs.CR",
eprint = "1612.02916"
}

@INPROCEEDINGS{Regev2010-zq,
title = "The learning with errors problem (invited survey)",
author = "Regev, Oded",
booktitle = "2010 IEEE 25th Annual Conference on Computational Complexity",
publisher = "IEEE",
year =  2010,
language = "en"
}

@ARTICLE{Wiesner1983-lt,
title = "Conjugate coding",
author = "Wiesner, Stephen",
journal = "SIGACT News",
publisher = "Association for Computing Machinery",
address = "New York, NY, USA",
volume =  15,
number =  1,
pages = "78--88",
year =  1983
}

@INPROCEEDINGS{Ben-Or2005-ag,
title = "Fast quantum byzantine agreement",
author = "Ben-Or, Michael and Hassidim, Avinatan",
booktitle = "Proceedings of the thirty-seventh annual ACM symposium on Theory of computing",
publisher = "ACM",
address = "New York, NY, USA",
year =  2005,
language = "en"
}

@INCOLLECTION{Zhandry2019-ru,
title = "Quantum lightning never strikes the same state twice",
author = "Zhandry, Mark",
booktitle = "Advances in Cryptology -- EUROCRYPT 2019",
publisher = "Springer International Publishing",
address = "Cham",
pages = "408--438",
series = "Lecture notes in computer science",
year =  2019,
language = "en"
}

@MISC{Sattath2022-my,
title = "Uncloneable Cryptography",
author = "Sattath, Or",
year =  2022,
primaryClass = "quant-ph",
eprint = "2210.14265"
}

@MISC{Gavinsky2011-qd,
title = "Quantum money with classical verification",
author = "Gavinsky, Dmytro",
year =  2011,
primaryClass = "quant-ph",
eprint = "1109.0372"
}

@INPROCEEDINGS{Dembo2020-vd,
title = "Everything is a Race and Nakamoto Always Wins",
author = "Dembo, Amir and Kannan, Sreeram and Tas, Ertem Nusret and Tse, David and Viswanath, Pramod and Wang, Xuechao and Zeitouni, Ofer",
booktitle = "Proceedings of the 2020 ACM SIGSAC Conference on Computer and Communications Security",
publisher = "ACM",
address = "New York, NY, USA",
year =  2020
}

@INCOLLECTION{Unruh2014-eb,
title = "Quantum position verification in the random oracle model",
author = "Unruh, Dominique",
booktitle = "Advances in Cryptology -- CRYPTO 2014",
publisher = "Springer Berlin Heidelberg",
address = "Berlin, Heidelberg",
pages = "1--18",
series = "Lecture notes in computer science",
year =  2014,
language = "en"
}

@INPROCEEDINGS{Bennett1984-xn,
title = "Quantum cryptography: Public key distribution and coin tossing",
author = "Bennett, Charles H and Brassard, Gilles",
booktitle = "Proceedings of IEEE International Conference on Computers, Systems and Signal Processing",
institution = "IEEE",
pages = "175--179",
year =  1984
}

@MISC{Lewis-Pye2023-zp,
title = "Permissionless Consensus",
author = "Lewis-Pye, Andrew and Roughgarden, Tim",
year =  2023,
primaryClass = "cs.DC",
eprint = "2304.14701"
}

@MISC{Nakamoto2008-bc,
title = "Bitcoin: A Peer-to-Peer Electronic Cash System",
author = "Nakamoto, Satoshi",
year =  2008
}

@MISC{Liu2022-ci,
title = "Beating classical impossibility of position verification",
author = "Liu, Jiahui and Liu, Qipeng and Qian, Luowen",
journal = "13th Innovations in Theoretical Computer Science Conference (ITCS 2022)",
publisher = "Schloss Dagstuhl - Leibniz-Zentrum f{\"{u}}r Informatik",
pages = "100:1--100:11",
year =  2022,
language = "en"
}

@INCOLLECTION{Bowman2021-mr,
title = "On elapsed time consensus protocols",
author = "Bowman, Mic and Das, Debajyoti and Mandal, Avradip and Montgomery, Hart",
booktitle = "Lecture Notes in Computer Science",
publisher = "Springer International Publishing",
address = "Cham",
pages = "559--583",
series = "Lecture notes in computer science",
year =  2021
}

@ARTICLE{Ben-David2023-in,
title = "Quantum Tokens for Digital Signatures",
author = "Ben-David, Shalev and Sattath, Or",
journal = "Quantum",
publisher = "Verein zur F{\"{o}}rderung des Open Access Publizierens in den Quantenwissenschaften",
volume =  7,
pages =  901,
year =  2023,
eprint = "1609.09047v8"
}

@ARTICLE{Brakerski2021-bg,
title = "A cryptographic test of quantumness and certifiable randomness from a single quantum device",
author = "Brakerski, Zvika and Christiano, Paul and Mahadev, Urmila and Vazirani, Umesh and Vidick, Thomas",
journal = "J. ACM",
publisher = "Association for Computing Machinery (ACM)",
volume =  68,
number =  5,
pages = "1--47",
year =  2021,
language = "en"
}

@ARTICLE{Mahadev2022-ts,
title = "Classical verification of quantum computations",
author = "Mahadev, Urmila",
journal = "SIAM J. Comput.",
publisher = "Society for Industrial \& Applied Mathematics (SIAM)",
volume =  51,
number =  4,
pages = "1172--1229",
year =  2022,
language = "en"
}

@INCOLLECTION{Chandran2009-ga,
title = "Position Based Cryptography",
author = "Chandran, Nishanth and Goyal, Vipul and Moriarty, Ryan and Ostrovsky, Rafail",
booktitle = "Advances in Cryptology - CRYPTO 2009",
publisher = "Springer Berlin Heidelberg",
address = "Berlin, Heidelberg",
pages = "391--407",
series = "Lecture notes in computer science",
year =  2009
}

@ARTICLE{Vidick2019-qm,
title = "Verifying quantum computations at scale: A cryptographic leash on quantum devices",
author = "Vidick, Thomas",
journal = "Bull. New Ser. Am. Math. Soc.",
publisher = "American Mathematical Society (AMS)",
volume =  57,
number =  1,
pages = "39--76",
year =  2019,
language = "en"
}

@ARTICLE{Shmueli2025-vi,
title = "On One-Shot Signatures, Quantum vs Classical Binding, and Obfuscating Permutations",
author = "Shmueli, Omri and Zhandry, Mark",
journal = "Cryptology ePrint Archive",
year =  2025
}

@ARTICLE{Ball2024-qd,
title = "Towards Permissionless Consensus in the Standard Model via Fine-Grained Complexity",
author = "Ball, Marshall and Garay, Juan and Hall, Peter and Kiayias, Aggelos and Panagiotakos, Giorgos",
journal = "Cryptology ePrint Archive",
year =  2024
}

@ARTICLE{Lewis2024-mg,
title = "Experimental implementation of an efficient test of quantumness",
author = "Lewis, Laura and Zhu, Daiwei and Gheorghiu, Alexandru and Noel, Crystal and Katz, Or and Harraz, Bahaa and Wang, Qingfeng and Risinger, Andrew and Feng, Lei and Biswas, Debopriyo and Egan, Laird and Vidick, Thomas and Cetina, Marko and Monroe, Christopher",
journal = "Phys. Rev. A (Coll. Park.)",
publisher = "American Physical Society (APS)",
volume =  109,
number =  1,
pages =  012610,
year =  2024,
language = "en"
}

@INCOLLECTION{Brakerski2023-ok,
title = "Simple tests of quantumness also certify qubits",
author = "Brakerski, Zvika and Gheorghiu, Alexandru and Kahanamoku-Meyer, Gregory D and Porat, Eitan and Vidick, Thomas",
booktitle = "Lecture Notes in Computer Science",
publisher = "Springer Nature Switzerland",
address = "Cham",
pages = "162--191",
series = "Lecture notes in computer science",
year =  2023,
language = "en"
}

@ARTICLE{Kahanamoku-Meyer2022-ta,
title = "Classically verifiable quantum advantage from a computational Bell test",
author = "Kahanamoku-Meyer, Gregory D and Choi, Soonwon and Vazirani, Umesh V and Yao, Norman Y",
journal = "Nat. Phys.",
publisher = "Springer Science and Business Media LLC",
volume =  18,
number =  8,
pages = "918--924",
year =  2022,
language = "en"
}

@MISC{Kokoris-Kogias2016-gm,
title = "Enhancing Bitcoin security and performance with strong consistency via collective signing",
author = "Kokoris-Kogias, Eleftherios and Jovanovic, Philipp and Gailly, Nicolas and Khoffi, Ismail and Gasser, Linus and Ford, Bryan",
year =  2016,
primaryClass = "cs.CR",
eprint = "1602.06997"
}

@ARTICLE{Pass2016-ss,
title = "Hybrid Consensus: Efficient Consensus in the Permissionless Model",
author = "Pass, Rafael and Shi, Elaine",
journal = "Cryptology ePrint Archive",
year =  2016
}

@MISC{Cohen2019-pn,
title = "The Chia Network Blockchain",
author = "Cohen, B and Pietrzak, Krzysztof",
year =  2019
}

@MISC{Gidney2024-ab,
title = "Magic state cultivation: growing {T} states as cheap as {CNOT} gates",
author = "Gidney, Craig and Shutty, Noah and Jones, Cody",
year =  2024,
primaryClass = "quant-ph",
eprint = "2409.17595"
}

@article{CS20,
  author       = {Andrea Coladangelo and
                  Or Sattath},
  title        = {A Quantum Money Solution to the Blockchain Scalability Problem},
  journal      = {Quantum},
  volume       = {4},
  pages        = {297},
  year         = {2020},
  url          = {https://doi.org/10.22331/q-2020-07-16-297},
  doi          = {10.22331/Q-2020-07-16-297},
  bibsource    = {dblp computer science bibliography, https://dblp.org}
}

@article{BS23,
  title={A graduate course in applied cryptography},
  author={Boneh, Dan and Shoup, Victor},
  journal={Draft 0.6},
  url={https://toc.cryptobook.us/book.pdf},
  year={2023}
}

@incollection {ZRW25,
    AUTHOR = {Guan, Ziyi and Riazanov, Artur and Yuan, Weiqiang},
     TITLE = {Breaking verifiable delay functions in the random oracle
              model},
 BOOKTITLE = {Advances in cryptology---{CRYPTO} 2025. {P}art {VII}},
    SERIES = {Lecture Notes in Comput. Sci.},
    VOLUME = {16006},
     PAGES = {161--191},
 PUBLISHER = {Springer},
      YEAR = {2025},
      ISBN = {978-3-032-01906-6; 978-3-032-01907-3},
   MRCLASS = {94A60},
  MRNUMBER = {4956005},
}

@article{PhysRevA.109.012610,
  title = {Experimental implementation of an efficient test of quantumness},
  author = {Lewis, Laura and Zhu, Daiwei and Gheorghiu, Alexandru and Noel, Crystal and Katz, Or and Harraz, Bahaa and Wang, Qingfeng and Risinger, Andrew and Feng, Lei and Biswas, Debopriyo and Egan, Laird and Vidick, Thomas and Cetina, Marko and Monroe, Christopher},
  journal = {Phys. Rev. A},
  volume = {109},
  issue = {1},
  pages = {012610},
  numpages = {8},
  year = {2024},
  month = {Jan},
  publisher = {American Physical Society},
  doi = {10.1103/PhysRevA.109.012610},
  url = {https://link.aps.org/doi/10.1103/PhysRevA.109.012610}
}
	
	\newpage
	\appendix
	
	\section{Solida Consensus Algorithms} \label{app:solida_algs}
	
	For completeness, we present the Steady State and View Change algorithms lifted almost verbatim from Solida\cite{Abraham2016-ii}. While Solida has to handle PoW solution events that update the committee leader, we do not, hence our protocol is even simpler and does not require keeping track of the lifespan number of a committee.
    
	We denote by $c$ the committee index (i.e. the number of reconfigurations performed), $s$ the current time slot (ranging from $1$ to $T'$), and by $v$ the view number. We use these to index the committee at every step of the protocol, using notation $\mathcal{C}(c, v, s)$. Since our committee is ordered, the leader in any reconfiguration step and view, denoted by $L(c,v)$, defined to by the $v \mod n$-th member of $\mathcal{C}_c$.
    We denote by $\langle x \rangle_{i}$ the message $x$ signed by the $i$-th committee member using their private key (with the corresponding public key known to all participants). $\langle x \rangle_{i}$ correspondingly indicates a signature by the current round leader $L$. When the signer is clear from context, we simply use $\langle x \rangle$.
    
    This follows the standard structure of BPFT protocols, in which two rounds of voting are used to construct a quorum certificate, and unresponsive leaders are replaced as needed. Here $f$ is the number of Byzantine members, and the protocol is secure as long as $f < n/3$.

\subsection*{4.2 Steady State}
\begin{itemize}
    \item \textbf{(Propose)} The leader $L$ picks a batch of valid transactions $\{tx\}$ and then broadcasts $\{tx\}$ and $\langle \text{propose}, c, v, s, h \rangle_L$ where $h$ is the hash digest of $\{tx\}$. After receiving $\{tx\}$ and $\langle \text{propose}, c, v, s, h \rangle_L$, a member $M \in \mathcal{C}(c, v, s)$ checks:
    \begin{itemize}
        \item $L = L(c, v)$ and $L$ has not sent a different proposal,
        \item $s$ is a fresh slot.
        \item $\{tx\}$ is a set of valid transactions whose digest is $h$.
    \end{itemize}
    \item \textbf{(Prepare)} If all the above checks pass, $M$ broadcasts $\langle \text{prepare}, c, v, s, h \rangle$. After receiving $2f+1$ matching prepare messages, a member $M \in \mathcal{C}(c, v, s)$ accepts the proposal (represented by its digest $h$), and concatenates the $2f+1$ matching prepare messages into an accept certificate $\mathcal{A}$.
    \item \textbf{(Commit)} Upon accepting $h$, $M$ broadcasts $\langle \text{commit}, c, v, s, h \rangle$. After receiving $2f+1$ matching commit messages, a member $M \in \mathcal{C}(c, v, s)$ commits $\{tx\}$ into slot $s$, and concatenates the $2f+1$ matching commit messages into a commit certificate $\mathcal{Q}$.
\end{itemize}

There is an additional notification step that propagates the new block to nodes outside the committee:
\begin{itemize}
    \item \textbf{(Notify)} Upon committing $h$, $M$ sends $\langle \langle \text{notify}, c, v, s, h \rangle, \mathcal{Q} \rangle$ to all other members to notify them about the decision. $M$ also starts propagating this decision on the peer-to-peer network to miners, users and merchants, etc. $M$ then moves to slot $s+1$.
\end{itemize}
Upon receiving a notify message like the above, a member commits $h$, sends and propagates its own notify message if it has not already done so, and then moves to slot $s+1$.

\subsection*{4.3 View Change}
This protocol handles an unresponsive leader by replacing them:

\begin{itemize}
    \item \textbf{(View-change)} Whenever a member $M$ moves to a new slot $s$ in a steady state, it starts a timer $T$. If $T$ reaches $4\Delta$ and $M$ still has not committed slot $s$, then $M$ abandons the current leader and broadcasts $\langle \text{view-change}, c, v \rangle$. 
    
    Upon receiving $2f+1$ matching view-change messages for $(c, v)$, if a member $M$ is not in a view higher than $(c, v)$, it forwards the $2f+1$ view-change messages to the new leader $L' = L(c, v+1)$. After that, if $M$ does not receive a new-view message from $L'$ within $2\Delta$, then $M$ abandons $L'$ and broadcasts $\langle \text{view-change}, c, v + 1 \rangle$.
    
    \item \textbf{(New-view)} Upon receiving $2f+1$ matching view-change messages for $(c, v)$, the new leader $L' = L(c, v+1)$ concatenates them into a view-change certificate $V$, broadcasts $\langle \text{new-view}, c, v+1, \mathcal{V} \rangle_L$ and enters view $(c, v+1)$. Upon receiving a $\langle \text{new-view}, c, v, V \rangle$ message, if a member $M$ is not in a view higher than $(c, v)$, it enters view $(c, v)$ and starts a timer $T$. If $T$ reaches $8\Delta$ and still no new slot is committed, then $M$ abandons $L'$ and broadcasts $\langle \text{view-change}, c, v \rangle$.
    
    \item \textbf{(Status)} Upon entering a new view $(c, v)$, $M$ sends 
    \begin{equation}
        \langle \langle \text{status}, c, v, s-1, h, s, h' \rangle, \mathcal{Q}, \mathcal{A} \rangle
    \end{equation}
    to the new leader $L' = L(c, v)$. In the above message, $h$ is the value committed in $s-1$ and $\mathcal{Q}$ is the corresponding commit certificate; $h'$ is the value accepted by $M$ in slot $s$ and $\mathcal{A}$ is the corresponding accept certificate ($h' = \mathcal{A} = \perp$ if $M$ has not accepted any value for slot $s$). We call the inner part of the message (i.e., excluding $\mathcal{Q}, \mathcal{A}$) its header.
    
    Upon receiving $2f+1$ status, $L'$ concatenates the $2f+1$ status headers to form a status certificate $S$. $L'$ then picks a status message that reports the highest last-committed slot $s^*$; if there is a tie, $L'$ picks the one that reports the highest ranked accepted value in slot $s^*+1$. Let the two certificates in this message be $\mathcal{Q}^*$ and $\mathcal{A}^*$ ($\mathcal{A}^*$ might be $\perp$).
    
    \item \textbf{(Re-propose)} The new leader $L'$ broadcasts 
    \begin{equation}
        \langle \text{repropose}, c, v, s^*+1, h', \mathcal{S}, \mathcal{Q}^*, \mathcal{A}^* \rangle_{L'}.
    \end{equation}
    In the above message, $s^*$ should be the highest last-committed slot reported in $\mathcal{S}$. $h'$ should match the value in $\mathcal{A}$ if $\mathcal{A}^* \neq \perp$; If $\mathcal{A}^* = \perp$ then $L'$ can choose $h'$ freely. %The repropose message serves two purposes. First, $C^*$ allows everyone to commit slot $s^*$. Second, it re-proposes $h'$ for slot $s^*+1$, and carries a proof $(S, \mathcal{A}^*)$ that $h'$ is safe for slot $s^*+1$. 
    The repropose message is invalid if any of these conditions is violated: $s^*$ is not the highest committed slot, $\mathcal{Q}$ is not for slot $s^*$, $\mathcal{A}$ is not for the highest ranked accepted value for $s^*+1$, or $h'$ is not the value certified by $\mathcal{A}$. 
    
    Upon receiving a valid repropose message, a member $M$ commits slot $s^*$ if it has not already; $M$ then executes the prepare/commit/notify steps as in the steady state for slot $s^*+1$ and marks all slots $> s^*+1$ fresh for view $(c, v)$.
\end{itemize}

Security of these protocols is proved in \cite{Abraham2016-ii}.
\end{document}